\documentclass[a4paper,11pt]{article}
\usepackage{bm,mathrsfs,amsmath,amsthm,amssymb,amscd,longtable,array,graphicx,bm}
%

\newcommand{\nc}{\newcommand}
\nc{\rnc}{\renewcommand}
\nc{\nn}{\nonumber}
\nc{\der}{{\partial}}
\rnc{\Im}{{\rm{Im}\,}}
\rnc{\Re}{{\rm{Re}\,}}
\nc{\db}{\displaybreak[0]\\}
\nc{\bra}{\langle}
\nc{\ket}{\rangle}
\nc{\bs}{\boldsymbol}

\newtheorem{theorem}{Theorem}[section]
\newtheorem{lemma}[theorem]{Lemma}

\newtheorem{proposition}[theorem]{Proposition}

\theoremstyle{definition}

\numberwithin{equation}{section}

\numberwithin{equation}{section}

\textwidth=15.5cm
\textheight=22cm
\hoffset -10mm
\topmargin -1.2cm
\hoffset -10mm

\begin{document}%
%
\title{
Factorization of rational six vertex model partition functions
}

\author{
Kohei Motegi \thanks{{\it E-mail address}: kmoteg0@kaiyodai.ac.jp}
\\\\
{\it Faculty of Marine Technology, Tokyo University of Marine Science and Technology,}\\
 {\it Etchujima 2-1-6, Koto-Ku, Tokyo, 135-8533, Japan} \\
\\\\
\\
}

\date{\today}

\maketitle

\begin{abstract}
We show factorization formulas for a class of partition functions of rational six vertex model.
First we show factorization formulas for partition functions under triangular boundary.
Further,
by combining the factorization formulas with
the explicit forms of the generalized domain wall boundary partition functions
by Belliard-Pimenta-Slavnov, we derive factorization formulas
for partition functions under trapezoid boundary
which can be viewed as a generalization of triangular boundary.
We also discuss an application to emptiness formation probabilities
under trapezoid boundary which admit determinant representations.
\end{abstract}

\section{Introduction}
Domain wall boundary partition functions \cite{Izergin,Korepin}
is one of the most fundamental classes of partition functions in six vertex models 
\cite{Lieb,Sutherland,Baxter}
which has been deeply investigated and applied to other areas of mathematics.
See \cite{Kuperberg,KZJ,CP,BF} for seminal works.
The fundamental result is that the partition functions can be represented by determinants
called Izergin-Korepin determinants, which the rational version appeared in a different context in \cite{Gaudin}.
Partial domain wall boundary partition functions \cite{GSV,Kostov,FW}
is a generalized class of the ordinary domain wall boundary partition functions
which one side of the boundaries is not fixed to be all spins up (or all spins down), but a mixture of up and down spins.
Recently, for the rational six-vertex model,
a further generalized class which every side of the boundaries is a mixture of up and down spins
was introduced, and the explicit forms representing the generalized domain wall boundary partition functions
was derived by Belliard-Pimenta-Slavnov \cite{BPS}.
The explicit forms also appear in their previous studies on the XXX chain \cite{BSV,BS}.
Two of the explicit expressions in \cite{BPS} are rational version of special functions
of trigonometric type introduced in \cite{KN,MN} in the study of Macdonald operators,
and two determinant forms in \cite{BPS} can be found in \cite{GZZ} which were used to derive the quantum-classical duality.
Other types of determinant forms can be found for example in \cite{Izergin,Gaudin,GSV,Kostov,FW,BS,MPT}
which a systematic understanding was given in \cite{MO}.

In this paper, motivated by the generalized domain wall boundary partition functions,
we introduce and investigate a class of rational six vertex model partition functions under triangular boundary and its
extension to trapezoid boundary.
The shapes of partition functions under triangular boundary and trapezoid boundary which we investigate
look similar to the ones first introduced and studied in \cite{Kuperberg}
and further investigated in \cite{BFK,ABF,BFKtwo,GdGMW},
but the boundary conditions are different.
In their setting, 
besides using the trigonometric $R$-matrix
\cite{FRT,Drinfeld,Jimbo} instead of the rational $R$-matrix,
one boundary uses the off-diagonal $K$-matrix in \cite{Kuperberg} and
more generic nondiagonal $K$-matrices in \cite{BFK,ABF,BFKtwo,GdGMW}
and the rest of the boundaries are fixed,
whereas in this paper we use identity matrix on one boundary
and the rest of the boundaries are free (each state is a mixture of up and down spins).
This setting for the free boundary parts is basically the same with \cite{BPS}.
The setting for triangular boundary looks also similiar to a slice of partition functions given in
\cite{KMO}, but one boundary is fixed instead of using the identity matrix.
Another difference is that the $R$-matrix is an operator-valued one which
is a $q=0$ version of the one introduced in \cite{BaSe}.
The operator-valued $R$-matrix is equivalent to the
three-dimensional $R$-matrix which satisfies the tetrahedron equation
rather than the Yang-Baxter equation.

We show that the rational six vertex model partition functions
which we introduce in this paper have factorized forms.
First we derive the explicit forms for the case of triangular boundary
by using unitarity relation and also showing several specializations vanish.
Next, combining the explicit forms for triangular boundary and the generalized domain wall boundary partition functions
by Belliard-Pimenta-Slavnov
\cite{BPS}, we derive factorized forms for trapezoid boundary.
We also introduce a version of emptiness formation probability
and investigate by combining the results for partition functions.

This paper is organized as follows.
In section 2, we recall the rational $R$-matrix and its basic properties.
In section 3, we introduce partition functions under triangular boundary
and show their explicit factorized forms.
In section 4, we first recall the results for the generalized domain wall boundary partition functions.
Then we introduce partition functions under trapezoid boundary and derive
their explicit forms.
Finally we discuss an application to a version of emptiness formation probability
which admits determimant representations.

\section{Rational six vertex model}

In this section, we introduce the rational six vertex model and
recall some of its basic properties. 
\begin{figure}[htb] 
\centering
\includegraphics[width=10cm]{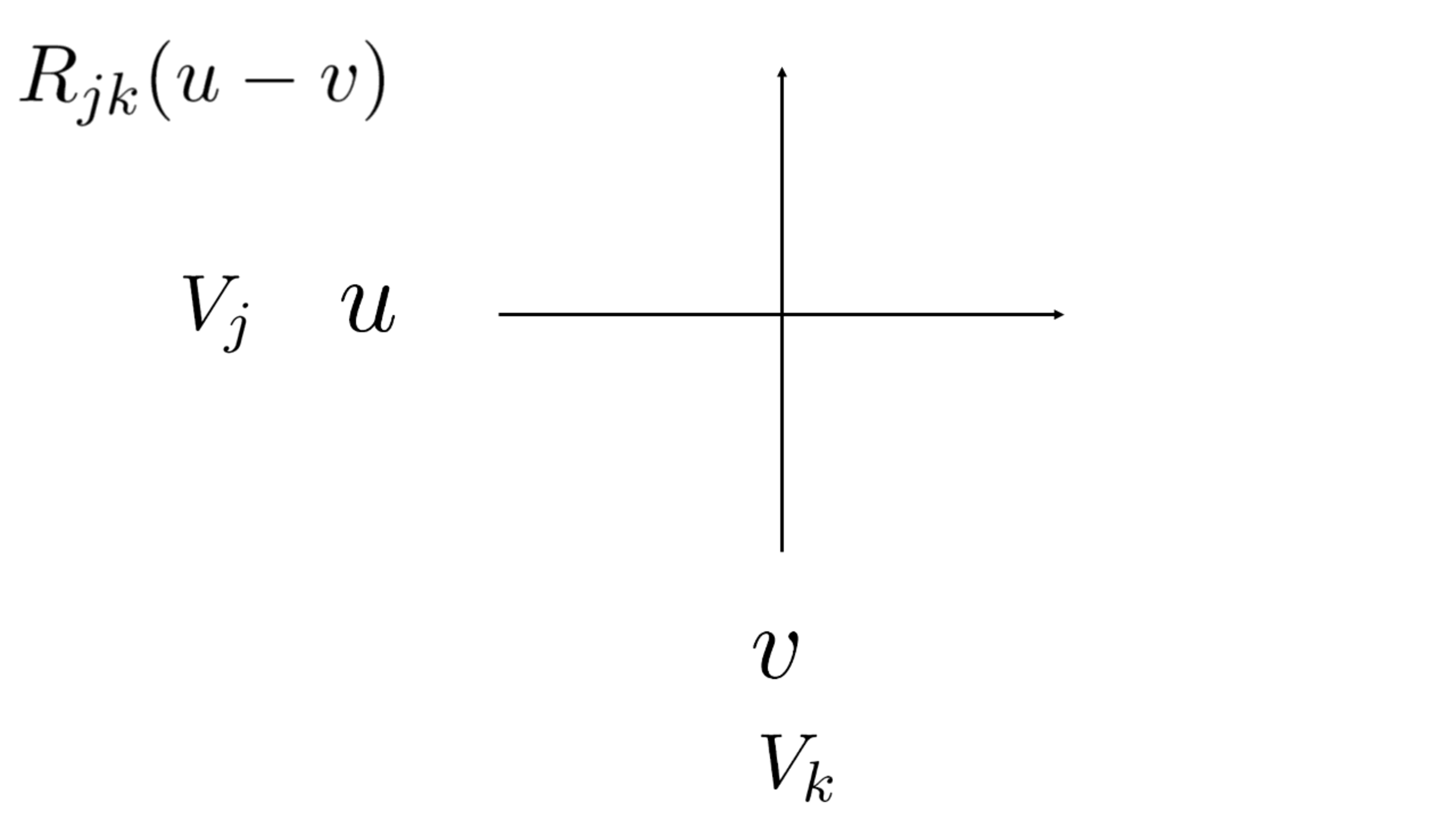}
\caption{The $R$-matrix acting on $V_j \otimes V_k$. The horizontal and vertical line represents $V_j$ and $V_k$
respectively, and to each line variable $u$ and $v$ is associated.
}
\label{figurermatrixzero}
\end{figure}

Let $V$ be a complex two-dimensional space
and $\{ | 1 \rangle, |2 \rangle \}$ be its orthonormal basis.
Let $V^*$ be the dual space of $V$ and denote the dual orthonormal basis as $\{ \langle 1|, \langle 2| \}$.
Using this bra-ket notation, the orthonormality relation is expressed as $\langle k|\ell \rangle=\delta_{k \ell}$ for $k,\ell=1,2$
where $\delta_{k \ell}$ is the Kronecker delta.
For arbitrary complex numbers $n_1,n_2,e_1,e_2,s_1,s_2,w_1,w_2$,
define four vectors $|n \rangle=n_1 | 1 \rangle+n_2 | 2 \rangle$,
$|e \rangle=e_1 | 1 \rangle+e_2 | 2 \rangle$, $| s \rangle =s_1 | 1 \rangle+s_2 |2 \rangle$,
$|w \rangle=w_1 | 1 \rangle+w_2 | 2 \rangle$,
and their duals
$\langle n|=n_1 \langle 1|+n_2 \langle 2|$, $\langle e|=e_1 \langle 1|+e_2 \langle 2|$, $\langle s|=s_1 \langle 1|+s_2 \langle 2|$, $\langle w|=w_1 \langle 1|+w_2 \langle 2|$.
To distinguish different two-dimensional spaces, we use subscript.
For the space $V_j,$ we denote the corresponding four vectors as $|n \rangle_j$, $|e \rangle_j$, $|s \rangle_j$, $|w \rangle_j$,
and for the dual $V_j^*$, we denote as  ${}_j \langle n|$, ${}_j \langle e|$, ${}_j \langle s |$, ${}_j \langle w|$.

\begin{figure}[h] 
\centering
\includegraphics[width=8cm]{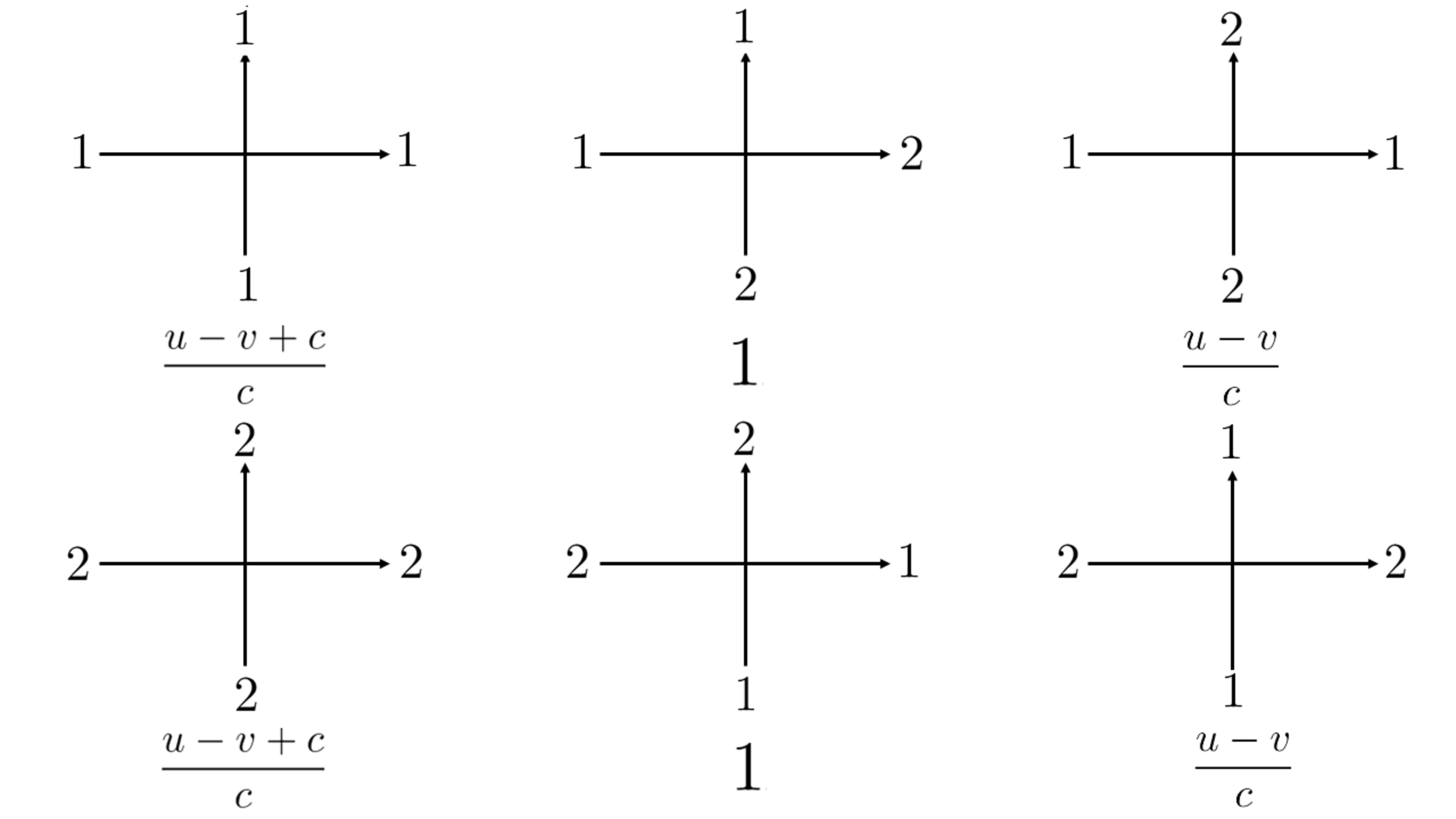}
\caption{Nonzero matrix elements of the rational $R$-matrix.
}
\label{figurermatrixelement}
\end{figure}

The $R$-matrix acting on $V_j \otimes V_k$ can be graphically represented as
Figure \ref{figurermatrixzero}.

Using the bra-ket notation,
the rational $R$-matrix acting on $V_j \otimes V_k$
which we use is given by
\begin{align}
R_{jk}(u-v)=&\frac{u-v+c}{c} 
(| 1 \rangle_j \otimes |1 \rangle_k 
 {}_j \langle 1| \otimes {}_k \langle 1|
+| 2 \rangle_j \otimes |2 \rangle_k 
 {}_j \langle 2| \otimes {}_k \langle 2|
 ) \nn \\
&+\frac{u-v}{c} 
(| 1 \rangle_j \otimes |2 \rangle_k 
 {}_j \langle 1| \otimes {}_k \langle 2|
+| 2 \rangle_j \otimes |1 \rangle_k 
 {}_j \langle 2| \otimes {}_k \langle 1|
 ) \nn \\
&+| 1 \rangle_j \otimes |2 \rangle_k 
 {}_j \langle 2| \otimes {}_k \langle 1|
+| 2 \rangle_j \otimes |1 \rangle_k 
 {}_j \langle 1| \otimes {}_k \langle 2|. \label{twociteR}
\end{align}
Note also the fundamental fact that when $u=v$, the $R$-matrix reduces to permutation operator $R_{jk}(0)=P_{jk}$
which we frequently use in this paper.
We extend the $R$-matrix $R_{jk}(u-v)$ to the one acting on $V_1 \otimes V_2 \otimes \cdots \otimes V_n$
for $n \ge 3$ by acting on $V_j$ and $V_k$ as \eqref{twociteR} and acting on the other spaces
as identity.

\begin{figure}[tbh] 
\centering
\includegraphics[width=10cm]{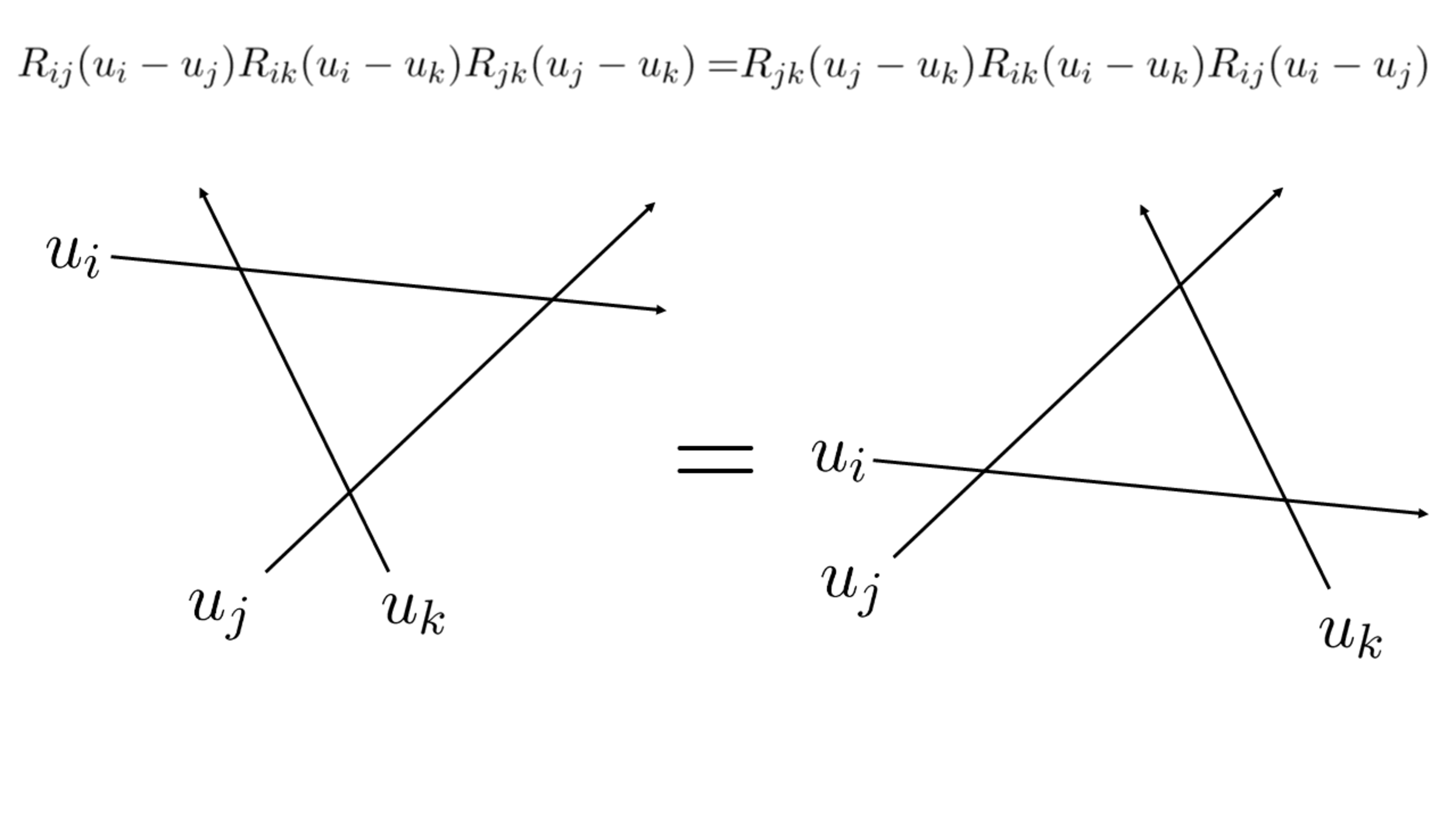}
\caption{Yang-Baxter relation \eqref{YBE}.}
\label{figureYBE}
\end{figure}

\begin{figure}[tbh] 
\centering
\includegraphics[width=8cm]{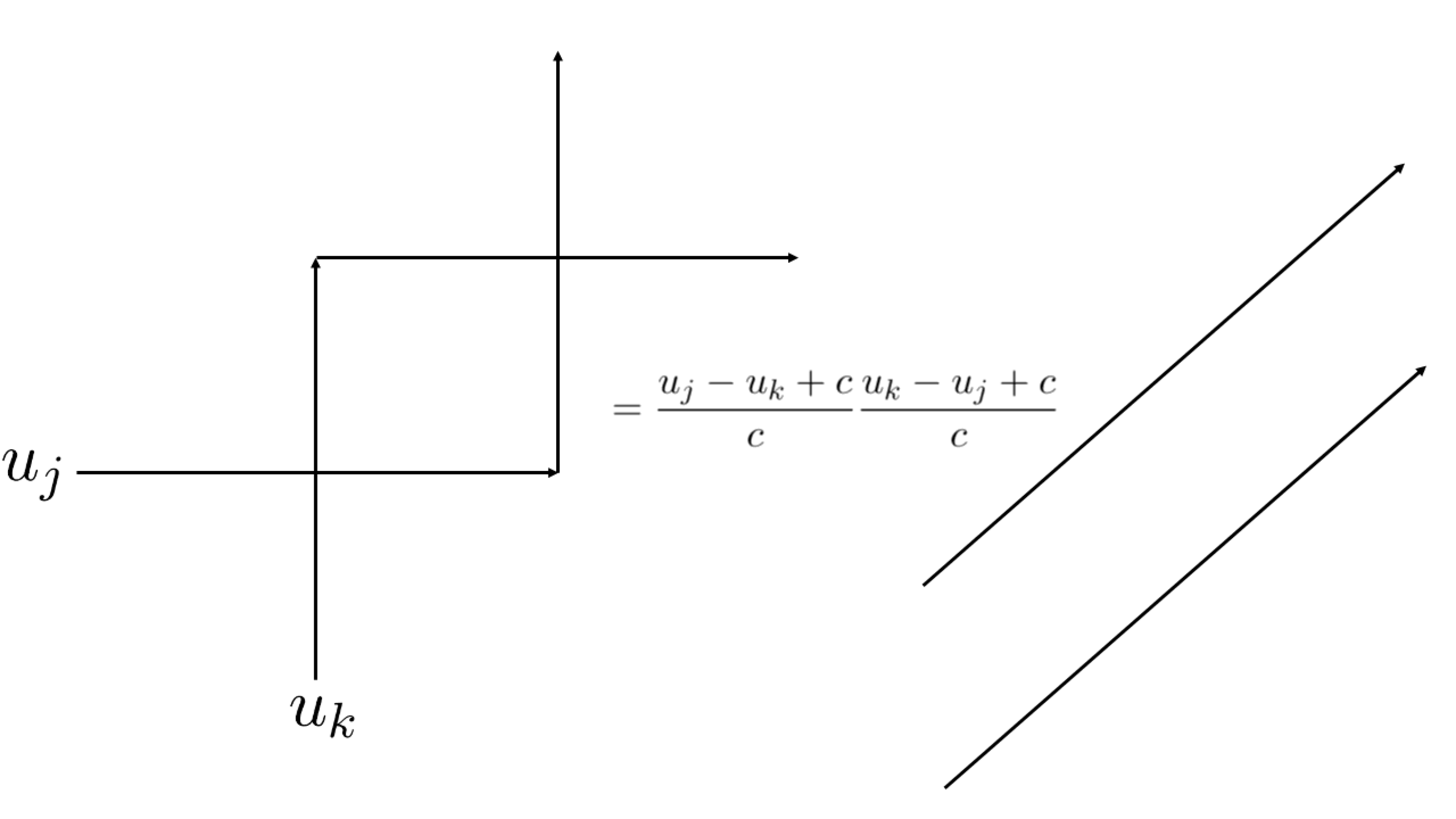}
\caption{Unitarity relation \eqref{unitarity}.}
\label{figureunitarity}
\end{figure}

See Figure \ref{figurermatrixelement}
where for each six configuration, the corresponding weight associated
is denoted below. For the remaining ten configurations which are not drawn, the weights associated are all zero.
Note this graphical description is slightly different from \cite[Figure 7]{BPS}.

The $R$-matrix satisfies the
Yang-Baxter relation (Figure \ref{figureYBE})
\begin{align}
&R_{ij}(u_i-u_j)R_{ik}(u_i-u_k)R_{jk}(u_j-u_k) \nn \\
=&R_{jk}(u_j-u_k)R_{ik}(u_i-u_k)R_{ij}(u_i-u_j) \in \mathrm{End}(V_i \otimes V_j \otimes V_k),
\label{YBE}
\end{align}
and the unitarity relation (Figure \ref{figureunitarity})
\begin{align}
R_{jk}(u_j-u_k)R_{kj}(u_k-u_j)=\frac{u_j-u_k+c}{c} \frac{u_k-u_j+c}{c} \mathrm{id} \in \mathrm{End}(V_j \otimes V_k).
\label{unitarity}
\end{align}

\section{Factorization of partition functions under triangular boundary}

We introduce the following partition functions
under triangular boundary
(Figure \ref{figurehaｌftwist})
\begin{align}
&Z_n(u_1,u_2,\dots,u_n) \nn \\
=&{}_1 \langle e| \otimes \cdots \otimes {}_n \langle e|
R_{n-1,n}(u_{n-1}-u_n)
(R_{n-2,n}(u_{n-2}-u_n)R_{n-2,n-1}(u_{n-2}-u_{n-1})) \nn \\
&\times \cdots \times
(R_{2n}(u_2-u_n) \cdots R_{23}(u_2-u_3))
(R_{1n}(u_1-u_n) \cdots R_{12}(u_1-u_2))
|s \rangle_1 \otimes \cdots \otimes |s \rangle_n. \label{sumhalftwist}
\end{align}

\begin{figure}[h] 
\centering
\includegraphics[width=10cm]{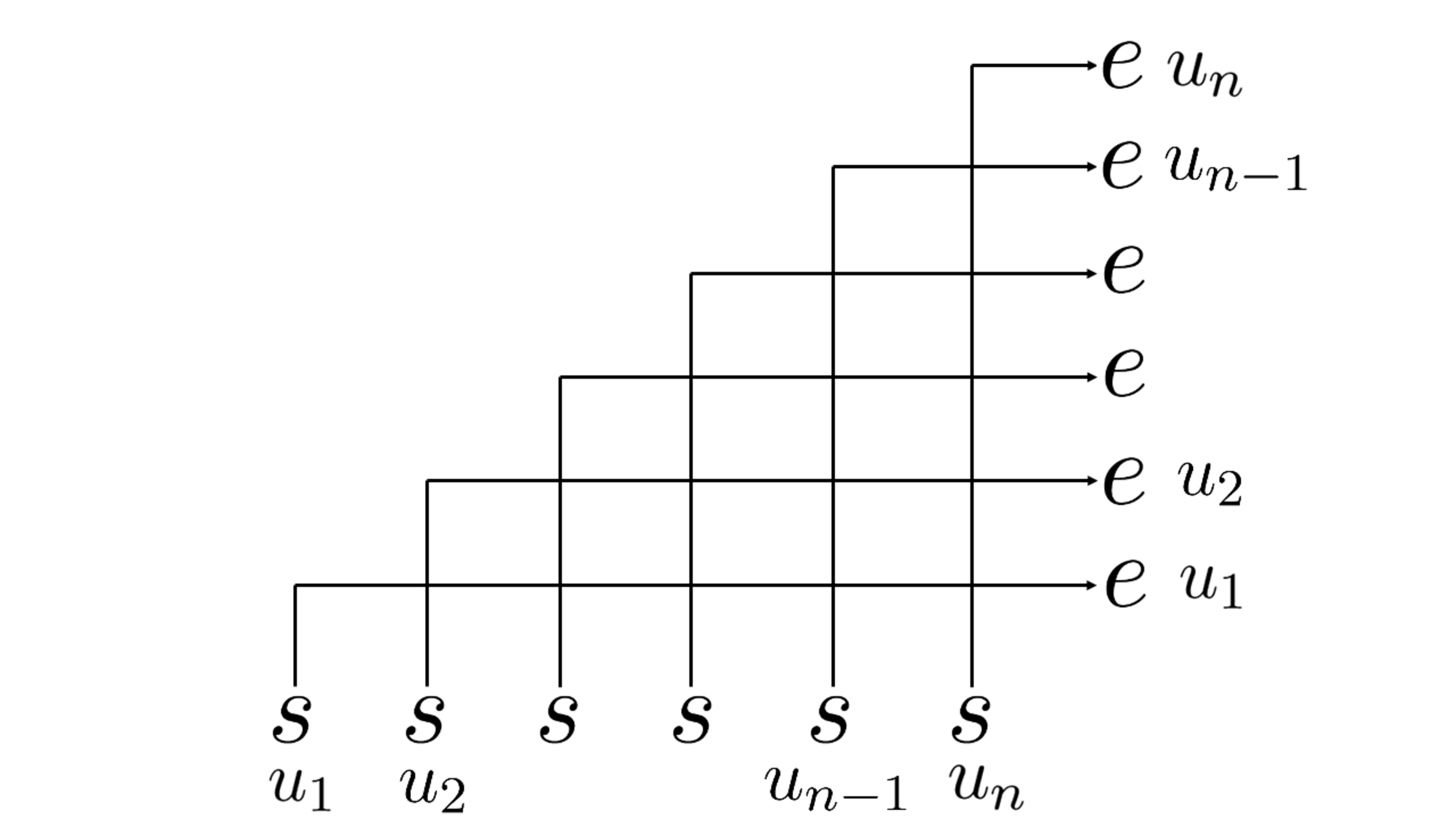}
\caption{Partition functions
under triangular boundary $Z_n(u_1,u_2,\dots,u_n)$ \eqref{sumhalftwist}.
At the bottom boundary, each state is  a mixture of $|1 \rangle$ and $|2 \rangle$ given by
$|s \rangle=s_1|1 \rangle+s_2|2 \rangle$.
At the right boundary, each state is given by
$\langle e |=e_1 \langle 1 |+e_2  \langle 2 |$.
}
\label{figurehaｌftwist}
\end{figure}

\begin{figure}[h] 
\centering
\includegraphics[width=10cm]{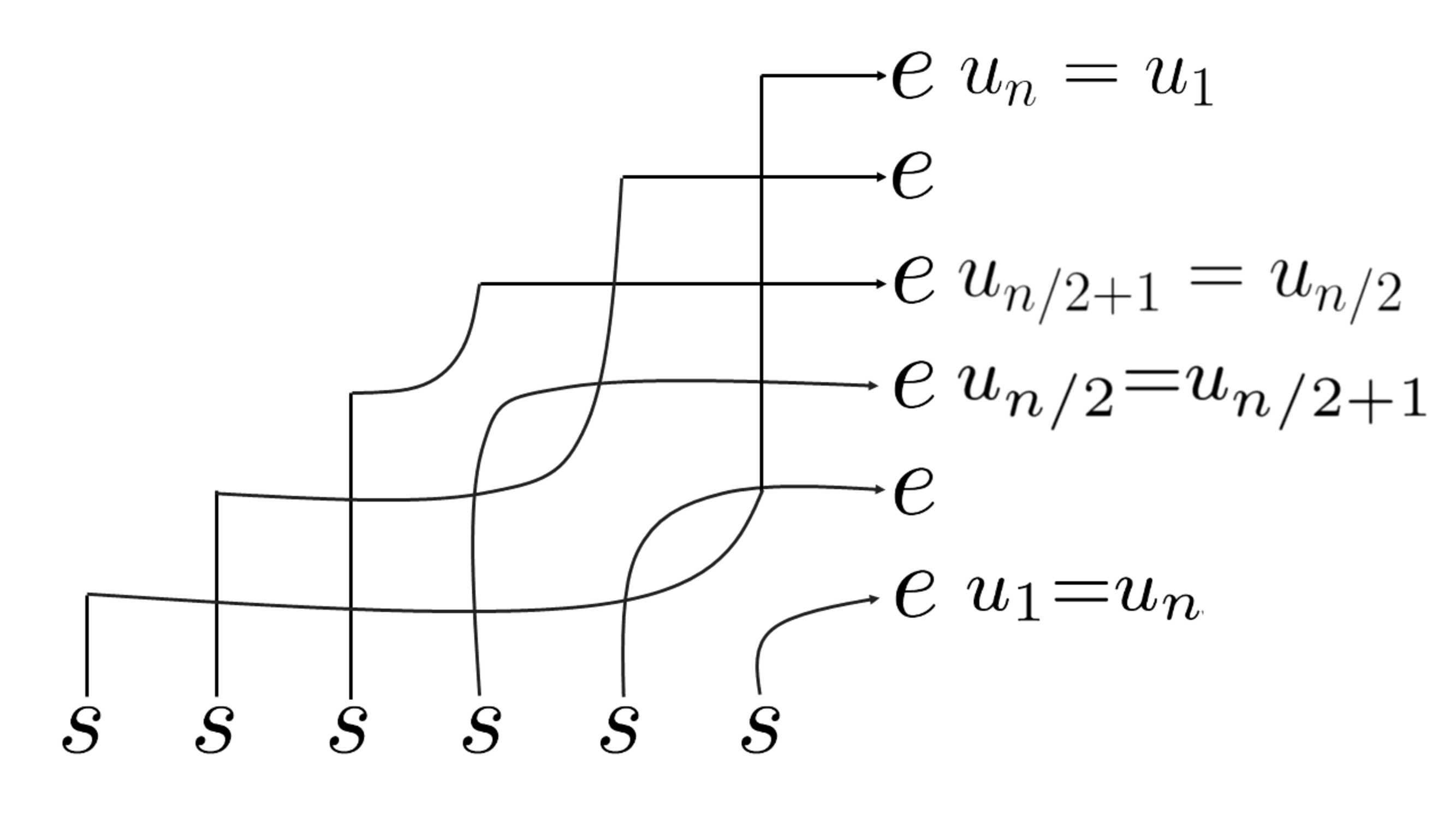}
\caption{
Specialization
$u_1=u_n, u_2=u_{n-1}, \dots,u_{n/2}=u_{n/2+1}$ of the partition functions
$Z_n(u_1,u_2,\dots,u_n)$ for $n$ even.
Diagonal part of the $R$-matrices are turned into permutation operators.
}
\label{figurespecializtionhalftwist}
\end{figure}

\begin{figure}[h] 
\centering
\includegraphics[width=10cm]{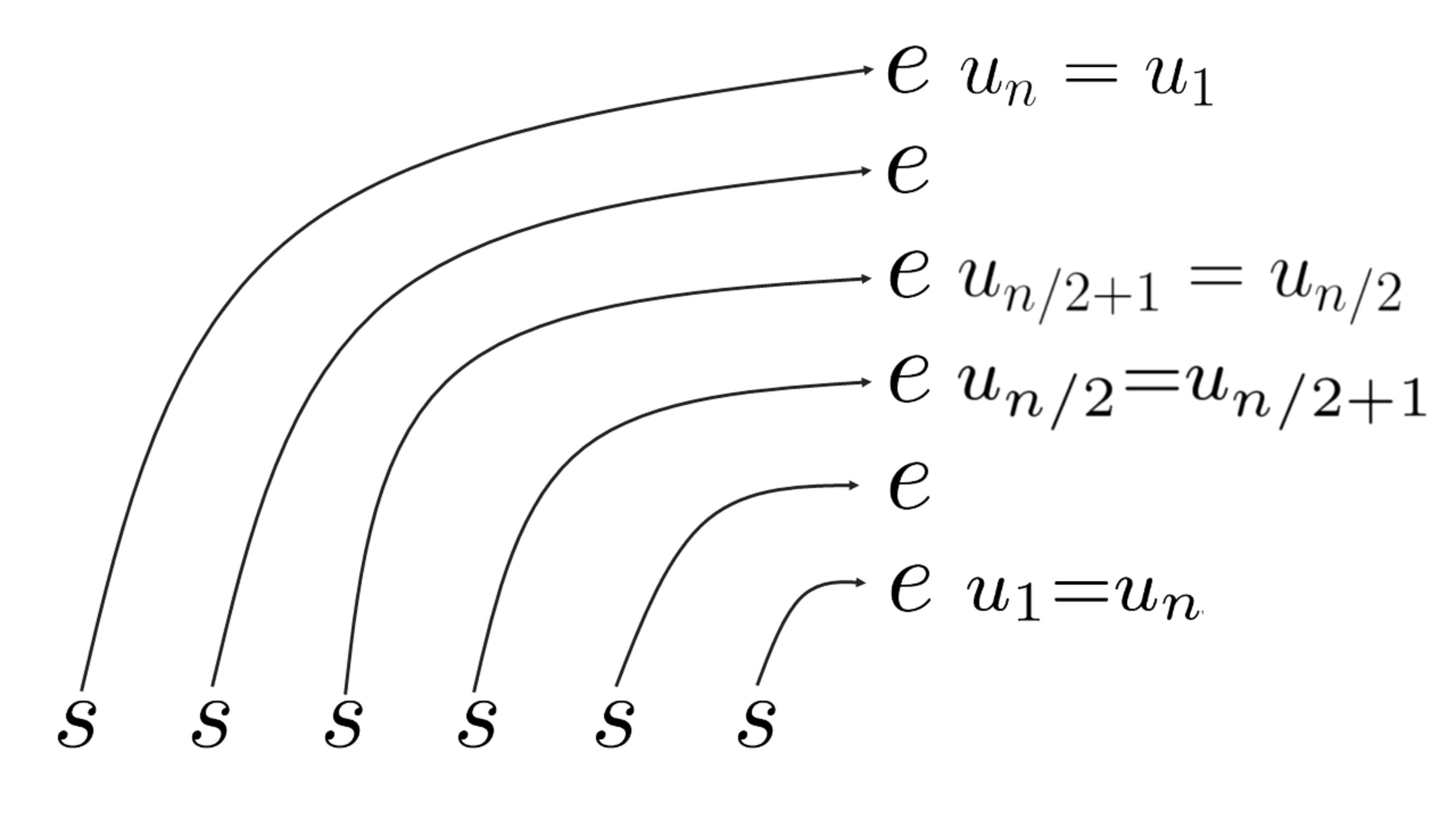}
\caption{
Using the unitarity relation repeatedly,
Figure \ref{figurespecializtionhalftwist} for
$Z_n(u_1,u_2,\dots,u_n)$ specialized at
$u_1=u_n, u_2=u_{n-1}, \dots,u_{n/2}=u_{n/2+1}$
can be reduced to the following figure up to overall factors coming from the unitarity relation.
Each line represents $e_1 s_1+e_2 s_2$.}
\label{figureunwinded}
\end{figure}

We show the following factorized forms for the partition functions
$Z_n(u_1,u_2,\dots,u_n)$.
\begin{proposition}
We have
\begin{align}
Z_n(u_1,u_2,\dots,u_n)
=(e_1 s_1+e_2 s_2)^n \prod_{1 \le i < j \le n} \Bigg( \frac{u_i-u_j+c}{c} \Bigg).
\label{factorizationhalftwist}
\end{align}
\end{proposition}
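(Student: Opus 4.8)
The plan is to exploit the special structure of the boundary --- all bottom sites carry the same vector $|s\rangle$ --- which makes the entire computation collapse to a product of scalars, since every pure tensor $|v\rangle_j\otimes|v\rangle_k$ is an eigenvector of $R_{jk}$. Indeed, reading off \eqref{twociteR} one has $R_{jk}(w)=\tfrac{w}{c}\,\mathrm{id}+P_{jk}$, consistently with $R_{jk}(0)=P_{jk}$, so for any $|v\rangle\in V$,
\[
R_{jk}(w)\bigl(|v\rangle_j\otimes|v\rangle_k\bigr)=\frac{w}{c}\,|v\rangle_j\otimes|v\rangle_k+|v\rangle_k\otimes|v\rangle_j=\frac{w+c}{c}\,|v\rangle_j\otimes|v\rangle_k,
\]
because $|v\rangle\otimes|v\rangle$ is symmetric (equivalently, this is a two-line check against the weights in Figure \ref{figurermatrixelement}). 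In the $n$-fold space, $R_{jk}(u_j-u_k)$ acts by this two-site operator on sites $j,k$ and as the identity elsewhere, hence fixes $|s\rangle_1\otimes\cdots\otimes|s\rangle_n$ up to the scalar $\tfrac{u_j-u_k+c}{c}$.

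Next I apply the ordered product of $R$-matrices in \eqref{sumhalftwist} to $|s\rangle_1\otimes\cdots\otimes|s\rangle_n$, starting from the rightmost factor. Each factor applied so far leaves the vector proportional to $|s\rangle_1\otimes\cdots\otimes|s\rangle_n$, so the next factor $R_{ij}(u_i-u_j)$ again only multiplies by $\tfrac{u_i-u_j+c}{c}$. Iterating, and using that the index pairs $(i,j)$ with $i<j$ occurring in \eqref{sumhalftwist} are precisely all pairs with $1\le i<j\le n$, each exactly once, I conclude that the whole operator string of \eqref{sumhalftwist} sends $|s\rangle_1\otimes\cdots\otimes|s\rangle_n$ to
\[
\prod_{1\le i<j\le n}\frac{u_i-u_j+c}{c}\;\,|s\rangle_1\otimes\cdots\otimes|s\rangle_n.
\]
Contracting with ${}_1\langle e|\otimes\cdots\otimes{}_n\langle e|$ and using ${}_i\langle e|s\rangle_i=e_1s_1+e_2s_2$ for every $i$ then yields \eqref{factorizationhalftwist}.

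There is essentially no hard step here; the only point needing care is the index bookkeeping --- verifying that unwinding the nested product in \eqref{sumhalftwist} lists every pair $(i,j)$, $i<j$, exactly once, and that the order in which the scalars are extracted is immaterial. The latter is harmless: since every site carries the same vector $|s\rangle$, each $R_{ij}$ genuinely fixes the intermediate vector rather than permuting its tensor factors, so the scalars accumulate in any order. If one prefers an induction on $n$, strip off the rightmost column $R_{1n}(u_1-u_n)\cdots R_{12}(u_1-u_2)$: it multiplies $|s\rangle_1\otimes\cdots\otimes|s\rangle_n$ by $\prod_{j=2}^{n}\tfrac{u_1-u_j+c}{c}$ and leaves it unchanged, after which the remaining $R$-matrices act only on sites $2,\dots,n$ and contribute $\prod_{2\le i<j\le n}\tfrac{u_i-u_j+c}{c}$ by the inductive hypothesis, with base case $Z_1={}_1\langle e|s\rangle_1=e_1s_1+e_2s_2$. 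This eigenvector argument bypasses the unitarity relation \eqref{unitarity} and the vanishing specializations suggested by Figures \ref{figurespecializtionhalftwist} and \ref{figureunwinded} altogether.
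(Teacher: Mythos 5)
Your proof is correct, and it takes a genuinely different and substantially shorter route than the paper. The key observation you make is the decomposition $R_{jk}(w)=\tfrac{w}{c}\,\mathrm{id}+P_{jk}$, which is immediate from \eqref{twociteR} and consistent with the unitarity relation \eqref{unitarity} (indeed $(\tfrac{w}{c}+P)(-\tfrac{w}{c}+P)=\tfrac{(w+c)(-w+c)}{c^2}\,\mathrm{id}$). Since $|s\rangle_1\otimes\cdots\otimes|s\rangle_n$ is fixed by every $P_{jk}$, it is a simultaneous eigenvector of every $R_{jk}(u_j-u_k)$ with eigenvalue $\tfrac{u_j-u_k+c}{c}$, and your bookkeeping that the nested product in \eqref{sumhalftwist} runs over each pair $1\le i<j\le n$ exactly once, with argument $u_i-u_j$, is accurate; contracting with ${}_i\langle e|s\rangle_i=e_1s_1+e_2s_2$ finishes the computation. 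The paper instead follows the Izergin--Korepin-style characterization: it bounds the degree in each $u_j$, proves the vanishing $Z_n|_{u_n=u_j+c}=0$ (using the antisymmetrizer image of $R$ at argument $-c$ together with the trace/cyclicity identity \eqref{traceidentity}), and fixes the remaining constant by evaluating at the pairwise specialization $u_1=u_n,\dots$ via the unitarity relation. Your eigenvector argument buys brevity and complete elementarity for this particular boundary condition, bypassing Properties (i)--(iii) entirely; the paper's interpolation machinery is heavier but is the template that survives when the in-state is no longer a symmetric product state (as in the trapezoid and generalized domain wall settings of Section 4), which is presumably why the author presents it in that form. As written, your argument is a complete and valid proof of \eqref{factorizationhalftwist}.
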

\begin{proof}
To prove \eqref{factorizationhalftwist},
it is enough to show the following properties. \\
(i) $Z_n(u_1,u_2,\dots,u_n)$ is a polynomial of degree at most $n-1$ in $u_j$, $j=1,\dots,n$. \\
(ii) $Z_n(u_1,u_2,\dots,u_n)$ satisfies
\begin{align}
Z_n(u_1,u_2,\dots,u_n)|_{u_n=u_j+c}=0, \ \ \ j=1,\dots,n-1. \label{vanishing}
\end{align}
(iii) $Z_n(u_1,u_2,\dots,u_n)$ satisfies
\begin{align}
&Z_n(u_1,u_2,\dots,u_n)|_{u_1=u_n, u_2=u_{n-1}, \dots,u_{n/2}=u_{n/2+1}} \nn \\
=&(e_1 s_1+e_2 s_2)^n
\prod_{1 \le j < k \le n-j}
\Bigg( \frac{u_j-u_k+c}{c} \frac{u_k-u_j+c}{c} \Bigg)
 \Bigg|_{u_1=u_n, u_2=u_{n-1}, \dots,u_{n/2}=u_{n/2+1}}, \label{specializationone}
\end{align}
for $n$ even, and
\begin{align}
&Z_n(u_1,u_2,\dots,u_n)|_{u_1=u_n, u_2=u_{n-1}, \dots,u_{(n-1)/2}=u_{(n+1)/2}} \nn \\
=&(e_1 s_1+e_2 s_2)^n
\prod_{1 \le j < k \le n-j}
\Bigg( \frac{u_j-u_k+c}{c} \frac{u_k-u_j+c}{c} \Bigg)
 \Bigg|_{u_1=u_n, u_2=u_{n-1}, \dots,u_{(n-1)/2}=u_{(n+1)/2}}, \label{specilizationtwo}
\end{align}
for $n$ odd.

Let us first show (i).
By the definition of $Z_n(u_1,u_2,\dots,u_n)$,
the variable $u_j$ comes from $(n-1)$ $R$-matrices $R_{1,j}(u_{1}-u_j),\dots,R_{j-1,j}(u_{j-1}-u_j),R_{j,j+1}(u_j-u_{j+1}),\dots,R_{jn}(u_j-u_n)$.
Since each $R$-matrix element contributes at most degree 1 to $u_j$, the claim follows.

Next we show (ii). The case $j=n-1$ can be proven by using
\begin{align}
R_{n-1,n}(u_{n-1}-u_n)|_{u_n=u_{n-1}+c}=&
(|2 \rangle_{n-1} \otimes |1 \rangle_n-|1 \rangle_{n-1} \otimes |2 \rangle_n)
 {}_{n-1} \langle 1| \otimes {}_n \langle 2|
\nn \\
&
+(|1 \rangle_{n-1} \otimes |2 \rangle_n
-|2 \rangle_{n-1} \otimes |1 \rangle_n){}_{n-1} \langle 2| \otimes {}_n \langle 1|,
\end{align}
to decompose \eqref{sumhalftwist} evaluated at $u_n=u_{n-1}+c$
as
\begin{align}
&({}_1 \langle e| \otimes \cdots \otimes 
{}_{n-2} \langle e|) \otimes
{}_{n-1} \langle e| \otimes
{}_n \langle e|
(|2 \rangle_{n-1} \otimes |1 \rangle_n-|1 \rangle_{n-1} \otimes |2 \rangle_n) \nn \\
&\times  {}_{n-1} \langle 1| \otimes {}_n \langle 2| 
(R_{n-2,n}(u_{n-2}-u_n)R_{n-2,n-1}(u_{n-2}-u_{n-1})) \nn \\
&\times \cdots \times
(R_{1n}(u_1-u_{n-1}-c) \cdots R_{12}(u_1-u_2))
|s \rangle_1 \otimes \cdots \otimes |s \rangle_n \nn \\
&+({}_1 \langle e| \otimes \cdots \otimes 
{}_{n-2} \langle e|) \otimes
{}_{n-1} \langle e| \otimes
{}_n \langle e|
(|1 \rangle_{n-1} \otimes |2 \rangle_n-|2 \rangle_{n-1} \otimes |1 \rangle_n) \nn \\
&\times  {}_{n-1} \langle 2| \otimes {}_n \langle 1| 
(R_{n-2,n}(u_{n-2}-u_n)R_{n-2,n-1}(u_{n-2}-u_{n-1})) \nn \\
&\times \cdots \times
(R_{1n}(u_1-u_{n-1}-c) \cdots R_{12}(u_1-u_2))
|s \rangle_1 \otimes \cdots \otimes |s \rangle_n \nn \\
=&(e_2 e_1-e_1 e_2)
{}_1 \langle e| \otimes \cdots \otimes 
{}_{n-2} \langle e| \otimes  {}_{n-1} \langle 1| \otimes {}_n \langle 2| 
(R_{n-2,n}(u_{n-2}-u_n)R_{n-2,n-1}(u_{n-2}-u_{n-1})) \nn \\
&\times \cdots \times
(R_{1n}(u_1-u_{n-1}-c) \cdots R_{12}(u_1-u_2))
|s \rangle_1 \otimes \cdots \otimes |s \rangle_n \nn \\
&+(e_1 e_2-e_2 e_1)
{}_1 \langle e| \otimes \cdots \otimes 
{}_{n-2} \langle e|
\otimes  {}_{n-1} \langle 2| \otimes {}_n \langle 1| 
(R_{n-2,n}(u_{n-2}-u_n)R_{n-2,n-1}(u_{n-2}-u_{n-1})) \nn \\
&\times \cdots \times
(R_{1n}(u_1-u_{n-1}-c) \cdots R_{12}(u_1-u_2))
|s \rangle_1 \otimes \cdots \otimes |s \rangle_n \nn \\
=&0.
\end{align}
For the other cases $u_n=u_j+c, \ j=1,\dots,n-2$, we use the following trace identity \cite[(28)]{BPS}:
\begin{align}
\mathrm{Tr}_a (|s \rangle_a {}_a \langle e | X_a)={}_a \langle e|X_a |s \rangle_a,
\label{traceidentity}
\end{align}
for any matrix $X$ acting on $V_a$.
One can show the other cases by
using \eqref{traceidentity}
and rewriting \eqref{sumhalftwist} in the following way
\begin{align}
&Z_n(u_1,u_2,\dots,u_n) \nn \\
=&\mathrm{Tr}_{12\dots n}(
(|s \rangle_1
{}_1 \langle e|) \otimes \cdots \otimes (|s \rangle_n {}_n \langle e|)
R_{n-1,n}(u_{n-1}-u_n)
(R_{n-2,n}(u_{n-2}-u_n)R_{n-2,n-1}(u_{n-2}-u_{n-1})) \nn \\
&\times \cdots \times
(R_{2n}(u_2-u_n) \cdots R_{23}(u_2-u_3))
(R_{1n}(u_1-u_n) \cdots R_{12}(u_1-u_2))
) \nn \\
=&\mathrm{Tr}_{12\dots n} ( R_{n-1,n}(u_{n-1}-u_n)
\cdots (R_{j+1,n}(u_{j+1}-u_n) \cdots R_{j+1,j+2}(u_{j+1}-u_{j+2})) \nn \\
&\times (|s \rangle_1
{}_1 \langle e|) \otimes \cdots \otimes (|s \rangle_n {}_n \langle e|)
(R_{j,n}(u_{j}-u_n) \cdots R_{j,j+1}(u_{j}-u_{j+1})) \nn \\
&\times \cdots \times (R_{1n}(u_1-u_n) \cdots R_{12}(u_1-u_2))
) \nn \\
=&\mathrm{Tr}_{12\dots n} (  (|s \rangle_1
{}_1 \langle e|) \otimes \cdots \otimes (|s \rangle_n {}_n \langle e|)
(R_{j,n}(u_{j}-u_n) \cdots R_{j,j+1}(u_{j}-u_{j+1})) \nn \\
&\times \cdots \times (R_{1n}(u_1-u_n) \cdots R_{12}(u_1-u_2)) R_{n-1,n}(u_{n-1}-u_n) \nn \\
&\times \cdots \times (R_{j+1,n}(u_{j+1}-u_n) \cdots R_{j+1,j+2}(u_{j+1}-u_{j+2}) )
) \nn \\
=&{}_1 \langle e| \otimes \cdots \otimes {}_n \langle e|
(R_{j,n}(u_{j}-u_n) \cdots R_{j,j+1}(u_{j}-u_{j+1})) \nn \\
&\times \cdots \times (R_{1n}(u_1-u_n) \cdots R_{12}(u_1-u_2)) R_{n-1,n}(u_{n-1}-u_n) \nn \\
&\times \cdots \times (R_{j+1,n}(u_{j+1}-u_n) \cdots R_{j+1,j+2}(u_{j+1}-u_{j+2}) )
|s \rangle_1 \otimes \cdots \otimes |s \rangle_n. 
\end{align}
Here we used the fact that
$[G \otimes G, R_{jk}(u_j-u_k)]=0$ for arbitrary $G$ and applied to $G=|s \rangle \langle e|$.

To show (iii),
we use the graphical description of $Z_n(u_1,u_2,\dots,u_n)$.
We show the case when $n$ is even.
The case when $n$ is odd can be proved in the same way.
We use the fact that $R_{jk}(u_j-u_k)$ reduces to permutation operator at $u_j=u_k$:
$R_{jk}(u_j-u_k)|_{u_j=u_k}=P_{jk}$.
Then we note $Z_n(u_1,u_2,\dots,u_n)$ specialized to
$u_1=u_n, u_2=u_{n-1}, \dots,u_{n/2}=u_{n/2+1}$
can be graphically represented as Figure \ref{figurespecializtionhalftwist}.
Using the Yang-Baxter relation \eqref{YBE} and the unitarity relation \eqref{unitarity} repeatedly,
Figure \ref{figurespecializtionhalftwist} can be transformed into Figure \ref{figureunwinded}.
There are $n$ lines and each of them gives the factor $e_1 s_1+e_2 s_2$.
Taking into account the factors coming from the unitarity relation, we find
$Z_n(u_1,u_2,\dots,u_n)$ satisfies
\begin{align}
&Z_n(u_1,u_2,\dots,u_n)|_{u_1=u_n, u_2=u_{n-1}, \dots,u_{n/2}=u_{n/2+1}} \nn \\
=&\prod_{1 \le j < k \le n-j}
\Bigg( \frac{u_j-u_k+c}{c} \frac{u_k-u_j+c}{c} \Bigg)
 \Bigg|_{u_1=u_n, u_2=u_{n-1}, \dots,u_{n/2}=u_{n/2+1}} \nn \\
&\times {}_1 \langle e| \otimes \cdots \otimes {}_n \langle e| \mathrm{id} |s \rangle_1 \otimes \cdots \otimes |s \rangle_n
\nn \\
=&(e_1 s_1+e_2 s_2)^n
\prod_{1 \le j < k \le n-j}
\Bigg( \frac{u_j-u_k+c}{c} \frac{u_k-u_j+c}{c} \Bigg)
 \Bigg|_{u_1=u_n, u_2=u_{n-1}, \dots,u_{n/2}=u_{n/2+1}}.
\end{align}

Finally we explain Properties (i), (ii), (iii) give \eqref{factorizationhalftwist}.
From Property (ii) \eqref{vanishing},
we get
\begin{align}
Z_n(u_1,u_2,\dots,u_n)=C \prod_{1 \le i < j \le n}(u_i-u_j+c), \label{propiandii}
\end{align}
where $C$ is some polynomial in $u_1,\dots,u_n$.
From Property (i), we conclude $C$ is independent of $u_1,\dots,u_n$,
since if it contains any one of them, say $u_j$, then the degree of $Z_n(u_1,u_2,\dots,u_n)$
in $u_j$ becomes equal to or larger than $n$, which leads to contradiction.
We can determine the constant $C$ using (iii).
For $n$ even, \eqref{specializationone} can be rewritten as
\begin{align}
&Z_n(u_1,u_2,\dots,u_n)|_{u_1=u_n, u_2=u_{n-1}, \dots,u_{n/2}=u_{n/2+1}} \nn \\
=&(e_1 s_1+e_2 s_2)^n
\prod_{1 \le j < k \le n-j}
\Bigg( \frac{u_j-u_k+c}{c} \frac{u_k-u_{n+1-j}+c}{c} \Bigg)
 \Bigg|_{u_1=u_n, u_2=u_{n-1}, \dots,u_{n/2}=u_{n/2+1}} \nn \\
=&(e_1 s_1+e_2 s_2)^n
\prod_{1 \le j < k \le n-j} \frac{u_j-u_k+c}{c} 
\prod_{1 \le j < k \le n-j} \frac{u_k-u_{n+1-j}+c}{c} 
 \Bigg|_{u_1=u_n, u_2=u_{n-1}, \dots,u_{n/2}=u_{n/2+1}} \nn \\
=&(e_1 s_1+e_2 s_2)^n
\prod_{1 \le j < k \le n-j} \frac{u_j-u_k+c}{c} 
\prod_{\substack{ 1 \le j < k \\ n+1-j \le k}} \frac{u_j-u_{k}+c}{c} 
 \Bigg|_{u_1=u_n, u_2=u_{n-1}, \dots,u_{n/2}=u_{n/2+1}} \nn \\
=&(e_1 s_1+e_2 s_2)^n
\prod_{1 \le j < k \le n} \frac{u_j-u_k+c}{c}
\Bigg|_{u_1=u_n, u_2=u_{n-1}, \dots,u_{n/2}=u_{n/2+1}}. \label{rewritespecialization}
\end{align}
Comparing
\eqref{propiandii} and \eqref{rewritespecialization},
we conclude $C=(e_1 s_1+e_2 s_2)^n/c^{n(n-1)/2}$
and hence \eqref{factorizationhalftwist} follows. The case when $n$ odd can be proved in the same way.
\end{proof}

\section{Factorization of partition functions under trapezoid boundary}

\begin{figure}[h] 
\centering
\includegraphics[width=10cm]{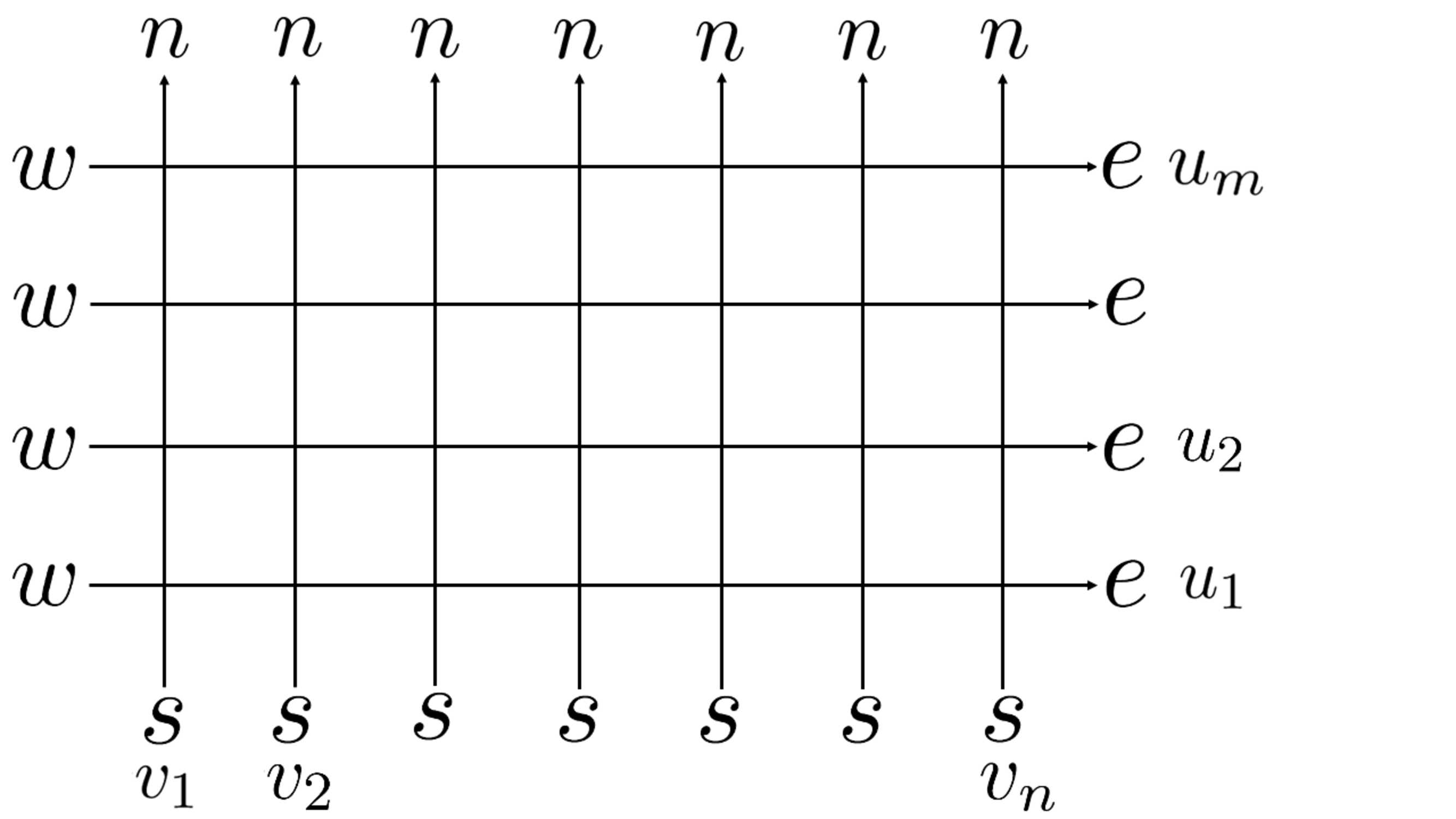}
\caption{The generalized domain wall boundary partition functions
$Z_{m,n}(u_1,\dots,u_m|v_1,\dots,v_n)$ \eqref{GDWBPF}.}
\label{figuredomainwall}
\end{figure}

In this section, we extend the factorization property proved for triangular boundary
to trapezoid boundary.
Before showing this,
we first recall the generalized domain wall boundary partition functions
and some of their explicit forms derived in \cite{BPS}.

The generalized domain wall boundary partition functions introduced
by Belliard-Pimenta-Slavnov are
\begin{align}
&Z_{m,n}(u_1,\dots,u_m|v_1,\dots,v_n) \nn \\
=&{}_{a_m} \langle e| \otimes \cdots \otimes {}_{a_1} \langle e|
\otimes {}_1 \langle n| \otimes \cdots \otimes {}_n \langle n|
(R_{a_m,n}(u_m-v_n) \cdots R_{a_m,1}(u_m-v_1)
) \times \cdots \nn \\
\times&
(R_{a_1,n}(u_1-v_n) \cdots R_{a_1,1}(u_1-v_1)
)
|w \rangle_{a_m} \otimes \cdots \otimes |w \rangle_{a_1} 
\otimes |s\rangle_1 \otimes \cdots \otimes |s \rangle_n,
\label{GDWBPF}
\end{align}
graphically represented as Figure
\ref{figuredomainwall}.
Specializing to $|s \rangle=|1 \rangle$, $|w \rangle=|2 \rangle$, $\langle n|=\langle 2|$, $\langle e|=\langle 1|$
reduces to the ordinary domain wall boundary partition functions.

One of the expressions derived in
\cite{BPS} is the following (writing down more explicitly).

\begin{theorem} (Belliard-Pimenta-Slavnov \cite[(93)]{BPS})
We have
\begin{align}
&Z_{m,n}(u_1,\dots,u_m|v_1,\dots,v_n) \nn \\
=&\frac{(n_1 s_1+n_2 s_2)^{n-m} (e_1 s_1+e_2 s_2)^m (n_1 w_1+n_2 w_2)^m
}{c^{mn}} \nn \\
\times&\sum_{K \subset [1,\dots,m]} (-\beta)^{|K|} \prod_{\substack{ i \in K \\ 1 \le k \le n }}  (u_i-v_k)
\prod_{\substack{ j \not\in K \\ 1 \le k \le n }}  (u_j-v_k+c)
 \prod_{\substack{ i \in K \\ j \not\in K }} \frac{u_i-u_j+c}{u_i-u_j}.
\label{BPSexpression}
\end{align}
Here, $[1,\dots,m]:=\{ j \ | \ j=1,\dots,m \}$ and
for $K$ a subset of $[1,\dots,m]$, we denote the number of elements of $K$ as $|K|$.
The symbol for summation means we take sum over all subsets of $[1,\dots,m].$
$\beta$ is explicitly given by
$\displaystyle \beta=\frac{(e_2 n_1-e_1 n_2)(s_2 w_1-s_1 w_2)}{(e_1 s_1+e_2 s_2)(n_1 w_1+n_2 w_2)}$.
\end{theorem}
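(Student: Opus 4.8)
The result is due to \cite{BPS} (see also \cite{BSV,BS}); I would prove it by an Izergin--Korepin-type characterization, running the induction on $m$ (with $m\le n$). The point that makes the argument uniform in the generic boundary vectors is that, directly from \eqref{twociteR}, for \emph{any} vectors $|e\rangle$ and $|w\rangle$ one has
\begin{align}
({}_j\langle e|\otimes {}_k\langle e|)\,R_{jk}(x)=\frac{x+c}{c}\,({}_j\langle e|\otimes {}_k\langle e|),
\qquad
R_{jk}(x)\,(|w\rangle_j\otimes |w\rangle_k)=\frac{x+c}{c}\,(|w\rangle_j\otimes |w\rangle_k),
\label{planeig}
\end{align}
and the same for $|n\rangle$ and $|s\rangle$; indeed on $\langle e|\otimes\langle e|$ the two off-diagonal weights $1$ combine with the weight $\tfrac{x}{c}$ to rebuild $\tfrac{x+c}{c}$. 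Combining \eqref{planeig} with the Yang--Baxter relation \eqref{YBE} through the standard $RTT$ manipulation --- the two reciprocal scalars produced at the covector end and at the vector end then cancel --- shows that $Z_{m,n}(u_1,\dots,u_m|v_1,\dots,v_n)$ is symmetric in the $u_i$ and, by the transposed argument on the columns, in the $v_k$. Counting $R$-matrices in \eqref{GDWBPF}, each $u_i$ appears in exactly $n$ of them and each $v_k$ in exactly $m$, so $Z_{m,n}$ is a polynomial of degree $\le n$ in each $u_i$ and $\le m$ in each $v_k$.

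Next I would establish two reductions in the variable $u_m$. At $u_m=v_n$ one has $R_{a_m,n}(0)=P_{a_m,n}$; transporting this permutation onto the boundary covectors swaps ${}_{a_m}\langle e|$ with ${}_{v_n}\langle n|$, after which the line formerly labelled $a_m$ (now carrying $\langle n|$ and $|w\rangle$) and the column $v_n$ (now carrying $\langle e|$ and $|s\rangle$) decouple: one application of \eqref{planeig} at each remaining crossing removes them at the cost of $\frac{n_1w_1+n_2w_2}{c^{n-1}}\prod_{k=1}^{n-1}(v_n-v_k+c)$ and $\frac{e_1s_1+e_2s_2}{c^{m-1}}\prod_{i=1}^{m-1}(u_i-v_n+c)$ respectively, leaving precisely $Z_{m-1,n-1}(u_1,\dots,u_{m-1}|v_1,\dots,v_{n-1})$. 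For the top of the polynomial, $R_{jk}(x)=\tfrac{x}{c}\,\mathrm{id}+O(1)$ as $x\to\infty$, so the $u_m$-block tends to $\tfrac{u_m^n}{c^n}\,\mathrm{id}_{a_m}$ and ${}_{a_m}\langle e|\,\mathrm{id}\,|w\rangle_{a_m}=e_1w_1+e_2w_2$, giving $Z_{m,n}=\tfrac{e_1w_1+e_2w_2}{c^n}\,u_m^{n}\,Z_{m-1,n}(u_1,\dots,u_{m-1}|v_1,\dots,v_n)+O(u_m^{n-1})$. By the $u$- and $v$-symmetries the first reduction supplies the values of $Z_{m,n}$ at $u_m=v_1,\dots,v_n$ and the second supplies its leading coefficient, and a polynomial of degree $\le n$ is determined by $n$ values and its leading coefficient.

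It then remains to check that the right-hand side of \eqref{BPSexpression} enjoys the same properties. Polynomiality is the only delicate point: the apparent simple pole of a summand at $u_i=u_j$ coming from $\prod_{i\in K,\,j\notin K}\tfrac{u_i-u_j+c}{u_i-u_j}$ is cancelled by the term in which the membership of $i$ and $j$ is exchanged, and this pairing also makes the sum symmetric in the $u_i$ (symmetry in the $v_k$ is manifest). Putting $u_m=v_n$ kills every term with $m\in K$, extracts a factor $c\prod_{k=1}^{n-1}(v_n-v_k+c)$ and a $K$-independent factor $\prod_{i=1}^{m-1}(u_i-v_n+c)$ from the surviving products, and collapses the remainder to the sum for $Z_{m-1,n-1}$; matching prefactors and invoking the identity $(1-\beta)(e_1s_1+e_2s_2)(n_1w_1+n_2w_2)=(e_1w_1+e_2w_2)(n_1s_1+n_2s_2)$ then reproduces both the recursion and the leading term found above for the left-hand side. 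With the base case $m=0$, where \eqref{GDWBPF} equals $(n_1s_1+n_2s_2)^n$ and the sum is empty, the induction on $m$ closes.

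The step I expect to be the real obstacle is the bookkeeping in the $u_m=v_n$ recursion: checking that, once $P_{a_m,n}$ is absorbed, each successive use of \eqref{planeig} that peels off a crossing of the ex-$a_m$ row or the ex-$v_n$ column is legitimate --- the relevant boundary covector must not yet have been acted on at that stage --- and that the resulting scalars of $(u_i-v_n+c)$ and $(v_n-v_k+c)$ type appear with exactly the powers of $c$ needed to reconstitute the prefactor $(n_1s_1+n_2s_2)^{n-m}(e_1s_1+e_2s_2)^m(n_1w_1+n_2w_2)^m/c^{mn}$. Should that accounting prove awkward, an alternative is to bypass the recursion and derive \eqref{BPSexpression} directly by resumming the quantum-inverse-scattering expansion of \eqref{GDWBPF} over the $m$ auxiliary lines by means of the $RTT$ relations, in the spirit of \cite{BSV,BS,BPS}, the sum over subsets $K$ emerging from the reordering of the off-diagonal monodromy entries.
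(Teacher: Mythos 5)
The paper does not actually prove this theorem: it is imported verbatim from Belliard--Pimenta--Slavnov \cite[(93)]{BPS} (the only computation the paper offers in this vicinity is the direct check of the $m=n=1$ case of the determinant form \eqref{BPSdeterminant}). Your proposal therefore supplies a proof where the paper supplies a citation, and it is a correct one. I checked the key ingredients: the eigenvector property \eqref{planeig} does follow from \eqref{twociteR} (the off-diagonal weights $1$ and $\tfrac{x}{c}$ recombine into $\tfrac{x+c}{c}$ on $\langle e|\otimes\langle e|$, and likewise on $|w\rangle\otimes|w\rangle$, $\langle n|\otimes\langle n|$, $|s\rangle\otimes|s\rangle$); the $u_m=v_n$ reduction is legitimate with the operator ordering of \eqref{GDWBPF}, since $R_{a_m,n}$ is adjacent to the covectors and each crossing you peel off is the first operator met by the relevant pristine covector, and the resulting prefactors ($c^{mn-1}$ versus $c^{(m-1)(n-1)}$ together with the explicit factor $c\prod_{k<n}(v_n-v_k+c)\prod_{i<m}(u_i-v_n+c)$) match on both sides; the leading-coefficient identity $(1-\beta)(e_1s_1+e_2s_2)(n_1w_1+n_2w_2)=(e_1w_1+e_2w_2)(n_1s_1+n_2s_2)$ holds by direct expansion; and $n$ interpolation points plus the leading coefficient do determine a degree-$\le n$ polynomial. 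Note that your induction uses both $Z_{m-1,n-1}$ and $Z_{m-1,n}$, so it should be phrased as induction on $m$ with the hypothesis holding for all admissible $n$ at level $m-1$; this is harmless. Your Izergin--Korepin-type characterization is also a genuinely different route from the source \cite{BPS}, which obtains the formula through modified Bethe vectors and scalar-product machinery; your argument is more elementary and, pleasantly, uses exactly the same toolkit (reduction to permutation operators, eigenvector absorption of $R$-matrices, degree counting and specialization) that the present paper deploys for its Proposition 3.1 and Lemma 4.3.
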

Using \cite[Proposition A.4.]{BSV},
\eqref{BPSexpression} can be tranformed into the
 following determinant expression.
\begin{theorem} (Belliard-Pimenta-Slavnov \cite[(91)]{BPS})
We have
\begin{align}
&Z_{m,n}(u_1,\dots,u_m|v_1,\dots,v_n) \nn \\
=&\frac{(n_1 s_1+n_2 s_2)^{n-m} (e_1 s_1+e_2 s_2)^m (n_1 w_1+n_2 w_2)^m 
}{c^{mn}} 
\prod_{\substack{1 \le i \le m \\ 1 \le j \le n}}
(u_i-v_j)
\nn \\
\times& \det_{1 \le j,k \le m}
\Bigg( \delta_{jk} \prod_{i=1}^n \frac{u_j-v_i+c}{u_j-v_i}
-\frac{\beta c}{u_j-u_k+c} \prod_{ \substack{i=1 \\ i \neq j}  }^m \frac{u_j-u_i+c}{u_j-u_i}
\Bigg).
\label{BPSdeterminant}
\end{align}
\end{theorem}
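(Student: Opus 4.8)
The plan is to obtain \eqref{BPSdeterminant} from the already-established sum formula \eqref{BPSexpression} of Theorem~4.1 by a purely algebraic manipulation, recognizing the sum over subsets $K\subset[1,\dots,m]$ as the expansion of the determinant of a diagonal matrix perturbed by a Cauchy-type matrix. This transformation is the content of \cite[Proposition A.4]{BSV}, but I will sketch a self-contained argument. Since the scalar prefactor $(n_1s_1+n_2s_2)^{n-m}(e_1s_1+e_2s_2)^m(n_1w_1+n_2w_2)^m/c^{mn}$ is common to both formulas, it suffices to prove
\[
\sum_{K\subset[1,\dots,m]}(-\beta)^{|K|}\prod_{\substack{i\in K\\1\le k\le n}}(u_i-v_k)\prod_{\substack{j\not\in K\\1\le k\le n}}(u_j-v_k+c)\prod_{\substack{i\in K\\j\not\in K}}\frac{u_i-u_j+c}{u_i-u_j}=\prod_{\substack{1\le i\le m\\1\le j\le n}}(u_i-v_j)\,\det M,
\]
where $M$ denotes the $m\times m$ matrix appearing inside the determinant in \eqref{BPSdeterminant}.

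First I would extract the factor $\prod_{i,j}(u_i-v_j)$ from the left-hand side: for $i\in K$ the block $\prod_k(u_i-v_k)$ is absorbed, while for $j\not\in K$ one is left with $a_j:=\prod_{k=1}^n\frac{u_j-v_k+c}{u_j-v_k}$, which is precisely the diagonal entry $M_{jj}$. Thus the identity reduces to
\[
\sum_{K\subset[1,\dots,m]}(-\beta)^{|K|}\Big(\prod_{j\not\in K}a_j\Big)\prod_{\substack{i\in K\\j\not\in K}}\frac{u_i-u_j+c}{u_i-u_j}=\det(D-C),
\]
with $D=\mathrm{diag}(a_1,\dots,a_m)$ and $C$ the matrix with entries $C_{jk}=\dfrac{\beta c}{u_j-u_k+c}\prod_{i\neq j}\dfrac{u_j-u_i+c}{u_j-u_i}$. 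Now I would invoke the standard multilinearity expansion $\det(D-C)=\sum_{K}(-1)^{|K|}\big(\prod_{j\not\in K}a_j\big)\det C[K]$, where $C[K]$ is the principal submatrix of $C$ on the rows and columns indexed by $K$ and the empty minor equals $1$; this follows by expanding the determinant column-by-column, each column of $D-C$ being a sum of $a_j\mathbf e_j$ and the $j$-th column of $-C$.

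It then remains to evaluate $\det C[K]$. Writing $C_{jk}=c_j\cdot\frac{1}{u_j-u_k+c}$ with $c_j=\beta c\prod_{i\neq j}\frac{u_j-u_i+c}{u_j-u_i}$, I factor $c_j$ out of row $j$ and apply the Cauchy determinant formula with $x_j=u_j$, $y_k=c-u_k$, giving
\[
\det C[K]=\Big(\prod_{j\in K}c_j\Big)\,\frac{\prod_{\substack{j<k\\j,k\in K}}(u_j-u_k)(u_k-u_j)}{\prod_{j,k\in K}(u_j-u_k+c)}.
\]
Substituting back and using $\prod_{j<k}(u_j-u_k)(u_k-u_j)=\prod_{j\neq k}(u_j-u_k)$ together with $\prod_{j,k\in K}(u_j-u_k+c)=c^{|K|}\prod_{j\neq k,\,j,k\in K}(u_j-u_k+c)$, the powers of $c$ cancel so that $(-1)^{|K|}\prod_{j\in K}c_j$ collapses to $(-\beta)^{|K|}$, and one is left to check the rational-function identity
\[
\Big(\prod_{j\in K}\prod_{i\neq j}\frac{u_j-u_i+c}{u_j-u_i}\Big)\frac{\prod_{j\neq k,\,j,k\in K}(u_j-u_k)}{\prod_{j\neq k,\,j,k\in K}(u_j-u_k+c)}=\prod_{\substack{i\in K\\j\not\in K}}\frac{u_i-u_j+c}{u_i-u_j}.
\]
This follows by splitting the index $i$ in the first product according to $i\in K$ or $i\not\in K$: the $i\not\in K$ part is literally the right-hand side, while the $i\in K$ part cancels against the Cauchy contribution. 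I expect the only real obstacle to be bookkeeping of signs and of the ``diagonal'' factors --- tracking the $(-1)^{|K|}$ from the expansion against the signs produced by Cauchy's formula, and verifying that every $c^{|K|}$-factor and every $(u_j-u_i+c)$, $(u_j-u_i)$ with $i,j\in K$ cancels cleanly. As a consistency check, all apparent poles at $u_i=u_j$ must disappear, since $Z_{m,n}$ in \eqref{BPSexpression} is manifestly a polynomial in the $u_i$.
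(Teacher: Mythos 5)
Your proposal is correct, but it is worth noting that the paper itself does not prove this statement at all: the theorem is quoted from Belliard--Pimenta--Slavnov, and the text merely remarks that the determinant form follows from the sum formula \eqref{BPSexpression} via \cite[Proposition A.4]{BSV}, supplementing this only with an $m=n=1$ consistency check. What you have written is, in effect, a self-contained proof of exactly that cited transformation. I checked the key steps and they hold: the multilinear column expansion $\det(D-C)=\sum_{K}(-1)^{|K|}\bigl(\prod_{j\notin K}a_j\bigr)\det C[K]$ carries no hidden sign (each column $j\notin K$ contributes $a_j$ with cofactor sign $(-1)^{j+j}=+1$); the Cauchy evaluation with $x_j=u_j$, $y_k=c-u_k$ gives $\det C[K]=\bigl(\prod_{j\in K}c_j\bigr)\prod_{\substack{j\neq k\\ j,k\in K}}(u_j-u_k)\big/\prod_{j,k\in K}(u_j-u_k+c)$, the diagonal of the denominator supplies exactly the $c^{|K|}$ that cancels against $\prod_{j\in K}c_j=\beta^{|K|}c^{|K|}\prod_{j\in K}\prod_{i\neq j}\frac{u_j-u_i+c}{u_j-u_i}$, and splitting the index $i$ according to $i\in K$ or $i\notin K$ reduces the remaining rational identity to $\det C[K]=\beta^{|K|}\prod_{i\in K,\,j\notin K}\frac{u_i-u_j+c}{u_i-u_j}$, which is precisely what the sum formula requires. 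So your route differs from the paper's only in that you open the black box: the paper buys brevity by citation, while your argument buys self-containedness at the cost of reproving a known lemma. If you include it, state the Cauchy formula you use explicitly and keep the ordered-versus-unordered pair conventions visible, since that is where the sign bookkeeping could silently go wrong.
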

For example, when $m=n=1$, by direct calculation we have
\begin{align}
Z_{1,1}(u|v)=&(e_1 w_1 n_1 s_1+e_2 w_2 n_2 s_2) \frac{u-v+c}{c}
+(e_2 w_1 n_1 s_2+e_1 w_2 n_2 s_1) \nn \\
&+(e_1 w_1 n_2 s_2+e_2 w_2 n_1 s_1) \frac{u-v}{c},
\end{align}
and further rewriting in the following way
\begin{align}
&(e_1 w_1 n_1 s_1+e_2 w_2 n_2 s_2) \frac{u-v+c}{c}
+(e_2 w_1 n_1 s_2+e_1 w_2 n_2 s_1) \Bigg(
\frac{u-v+c}{c}-\frac{u-v}{c}
\Bigg)
\nn \\
&+(e_1 w_1 n_2 s_2+e_2 w_2 n_1 s_1) \frac{u-v}{c} \nn \\
=&(e_1 w_1 n_1 s_1+e_2 w_2 n_2 s_2+e_2 w_1 n_1 s_2+e_1 w_2 n_2 s_1) \frac{u-v+c}{c} \nn \\
&+(e_1 w_1 n_2 s_2+e_2 w_2 n_1 s_1-e_2 w_1 n_1 s_2-e_1 w_2 n_2 s_1) \frac{u-v}{c} \nn \\
=&(e_1 s_1+e_2 s_2)(n_1 w_1+n_2 w_2)
\frac{u-v+c}{c}
-(e_2 n_1-e_1 n_2)(s_2 w_1-s_1 w_2) \frac{u-v}{c} \nn \\
=&(e_1 s_1+e_2 s_2)(n_1 w_1+n_2 w_2) \frac{u-v}{c}
\Bigg(
\frac{u-v+c}{u-v} -\frac{(e_2 n_1-e_1 n_2)(s_2 w_1-s_1 w_2)}{(e_1 s_1+e_2 s_2)(n_1 w_1+n_2 w_2)}
\frac{c}{u-u+c}
\Bigg),
\end{align}
we can check this is the $m=n=1$ case of
\eqref{BPSdeterminant}.

\begin{figure}[h] 
\centering
\includegraphics[width=10cm]{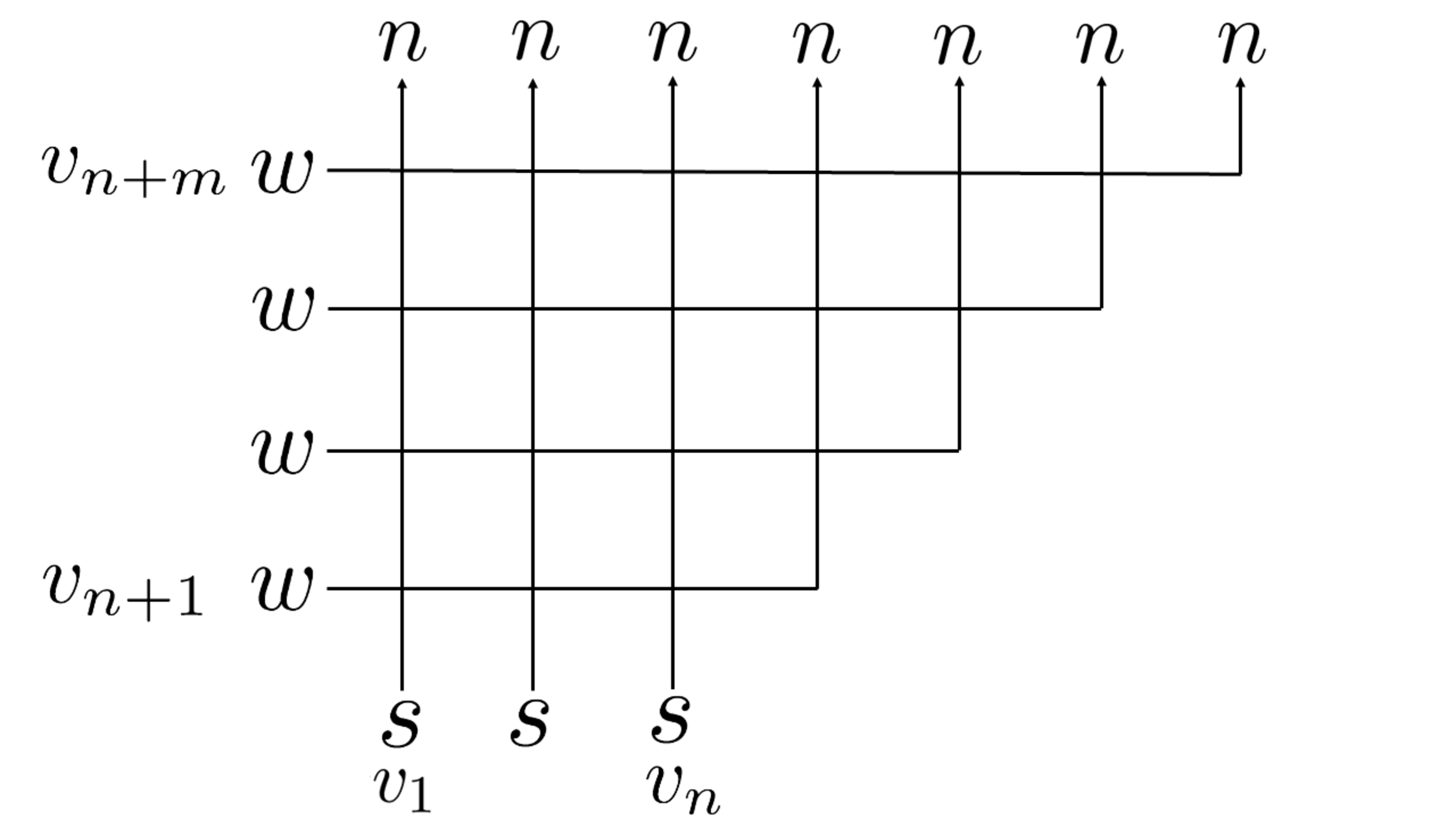}
\caption{Partition functions under trapezoid boundary
$T_{n,m}(v_1,\dots,v_{n}|v_{n+1},\dots,v_{n+m})$.}
\label{figuretrapezoidboundary}
\end{figure}

We use the following specialization later.
\begin{lemma}
We have
\begin{align}
&Z_{m,n}(u_1,\dots,u_m|v_1,\dots,v_n)|_{u_1=v_{n-m+1},u_2=v_{n-m+2},\dots,u_m=v_{n}} \nn \\
=&(n_1 s_1+n_2 s_2)^{n-m} (e_1 s_1+e_2 s_2)^m (n_1 w_1+n_2 w_2)^m
\prod_{i=n-m+1}^n \prod_{j=1}^n  \Bigg(\frac{v_i-v_j+c}{c} \Bigg). \label{BPSexpressionspecialization}
\end{align}
\end{lemma}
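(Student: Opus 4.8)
The plan is to read the statement off directly from the explicit sum formula \eqref{BPSexpression} of Belliard--Pimenta--Slavnov, by showing that the specialization $u_i=v_{n-m+i}$ annihilates every term of the sum over subsets $K\subset[1,\dots,m]$ except the one for $K=\emptyset$. (Throughout one assumes $m\le n$, so that the indices $n-m+1,\dots,n$ make sense.)

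Concretely, I would substitute $u_i=v_{n-m+i}$, $i=1,\dots,m$, into \eqref{BPSexpression}. Fix a nonempty $K$ and pick some $i\in K$. The corresponding summand contains the factor $\prod_{i\in K,\,1\le k\le n}(u_i-v_k)$, and in particular the factor $u_i-v_{n-m+i}$, which becomes $v_{n-m+i}-v_{n-m+i}=0$ under the specialization. The only other factor in that summand that could be singular is $\prod_{i\in K,\,j\notin K}\frac{u_i-u_j+c}{u_i-u_j}$; for generic $v_1,\dots,v_n$ the specialized differences $u_i-u_j=v_{n-m+i}-v_{n-m+j}$ are nonzero, so this factor stays finite and the whole summand vanishes. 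Hence only $K=\emptyset$ survives.

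For $K=\emptyset$ all three $K$-dependent products are empty and the $(-\beta)^{|K|}$ prefactor equals $1$, so \eqref{BPSexpression} reduces to
\[
\frac{(n_1 s_1+n_2 s_2)^{n-m}(e_1 s_1+e_2 s_2)^m(n_1 w_1+n_2 w_2)^m}{c^{mn}}\prod_{j=1}^m\prod_{k=1}^n(u_j-v_k+c).
\]
Substituting $u_j=v_{n-m+j}$ turns $\prod_{j=1}^m\prod_{k=1}^n(u_j-v_k+c)$ into $\prod_{i=n-m+1}^n\prod_{k=1}^n(v_i-v_k+c)$, a product of exactly $mn$ linear factors; distributing the $c^{mn}$ in the denominator over them yields $\prod_{i=n-m+1}^n\prod_{j=1}^n\frac{v_i-v_j+c}{c}$, which is precisely the right-hand side of \eqref{BPSexpressionspecialization}. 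Since both sides are polynomials in $v_1,\dots,v_n$ — the left-hand side because $Z_{m,n}$ is, the right-hand side manifestly — the identity, established for generic $v$, holds identically.

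One could alternatively argue graphically: at $u_i=v_{n-m+i}$ a block of $R$-matrices collapses to permutation operators, and peeling the resulting straight lines off one at a time (exactly as in the proof of part (iii) of the Proposition of Section 3) gives the same answer. Either way, the only work involved is routine bookkeeping of the products together with the genericity remark, so I do not expect any genuine obstacle; the algebraic route via \eqref{BPSexpression} is the shortest.
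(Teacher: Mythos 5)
Your proposal is correct and follows essentially the same route as the paper: the paper also specializes the Belliard--Pimenta--Slavnov sum formula \eqref{BPSexpression}, observes that the factor $\prod_{i\in K,\,1\le k\le n}(u_i-v_k)$ kills every summand with $K\neq\emptyset$, and reads off the result from the $K=\emptyset$ term. Your added genericity remark about the denominators $u_i-u_j$ is a welcome explicit justification of a point the paper leaves implicit.
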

\begin{proof}
This can be checked in the same way with \cite[Prop 4.8]{MO}, for example.
Denote the rational function in the right hand side of \eqref{BPSexpression}
as $F_{m,n}(u_1,\dots,u_m|v_1,\dots,v_n) $.
It is easy to see that due to the factor $\displaystyle \prod_{\substack{ i \in K \\ 1 \le k \le n }}  (u_i-v_k)$,
only the summand corresponding to $K=\phi$ survives
after the specialization $u_1=v_{n-m+1},u_2=v_{n-m+2},\dots,u_m=v_{n}$,
and we get
\begin{align}
&F_{m,n}(u_1,\dots,u_m|v_1,\dots,v_n)|_{u_1=v_{n-m+1},u_2=v_{n-m+2},\dots,u_m=v_{n}} \nn \\
=&(n_1 s_1+n_2 s_2)^{n-m} (e_1 s_1+e_2 s_2)^m (n_1 w_1+n_2 w_2)^m
\prod_{i=n-m+1}^n \prod_{j=1}^n  \Bigg(\frac{v_i-v_j+c}{c} \Bigg),
\end{align} and hence
\eqref{BPSexpressionspecialization}.
\end{proof}

\begin{figure}[h] 
\centering
\includegraphics[width=10cm]{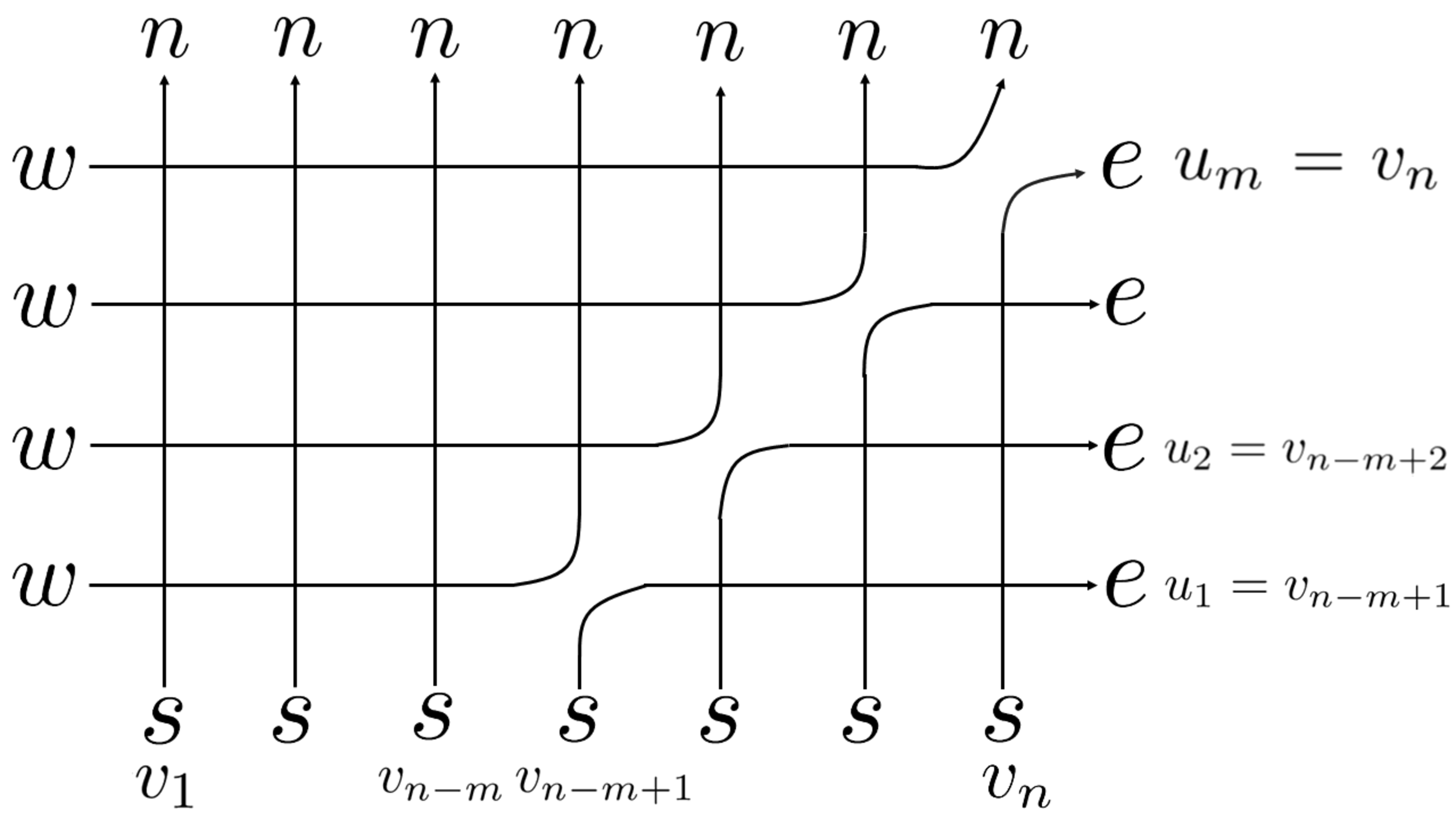}
\caption{Specialization $u_1=v_{n-m+1},u_2=v_{n-m+2},\dots,u_m=v_{n}$ of the generalized domain wall boundary partition fuctions
$Z_{m,n}(u_1,\dots,u_m|v_1,\dots,v_n)$.
From the $(n-m+1)$-th to the $n$-th columns, one $R$-matrix in each column turns into a permutation operator
and these permutation operators separate the original regime into two regimes.
The left one gives $T_{n-m,m}(v_1,\dots,v_{n-m}|v_{n-m+1},\dots,v_n)$
and the right one gives $Z_m(v_{n-m+1},\dots,v_n)$,
and taking product gives an expression for the specialization.
}
\label{figurerdomainwallspecialization}
\end{figure}

Now we introduce
partition functions under trapezoid boundary
as Figure \ref{figuretrapezoidboundary}
and denote as
$T_{n,m}(v_1,\dots,v_{n}|v_{n+1},\dots,v_{n+m})$.

We have the following factorized expressions
for $T_{n,m}(v_1,\dots,v_{n}|v_{n+1},\dots,v_{n+m})$.

\begin{theorem}
We have
\begin{align}
&T_{n,m}(v_1,\dots,v_{n}|v_{n+1},\dots,v_{n+m}) \nn \\
=&(n_1 s_1+n_2 s_2)^{n} (n_1 w_1+n_2 w_2)^m \nn \\
\times&\prod_{i=n+1}^{n+m} \prod_{j=1}^{n} \Bigg( \frac{v_i-v_j+c}{c}  \Bigg)
\prod_{n+1 \le i < j \le n+m} \Bigg( \frac{v_j-v_i+c}{c} \Bigg).
\label{factorizationtrapezoid}
\end{align}
\end{theorem}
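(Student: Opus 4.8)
The plan is to express the trapezoid partition function in terms of two objects already under control: the generalized domain wall partition function $Z_{m,n}$ of Belliard--Pimenta--Slavnov and the triangular partition function $Z_n$ of the Proposition in Section 3. The bridge is the geometric fact recorded in Figure \ref{figurerdomainwallspecialization}: if one specializes the spectral parameters of $Z_{m,n}$ so that the $m$ auxiliary rapidities coincide with the last $m$ quantum rapidities, then in each of the columns $n-m+1,\dots,n$ exactly one $R$-matrix becomes the permutation operator, because $R_{jk}(0)=P_{jk}$, and these permutations cut the lattice into two non-interacting pieces: a trapezoid on the left and a triangle on the right. First I would turn this picture into the algebraic identity
\begin{align}
&Z_{m,n}(u_1,\dots,u_m|v_1,\dots,v_n)\big|_{u_i=v_{n-m+i},\ i=1,\dots,m} \nn \\
=&\,T_{n-m,m}(v_1,\dots,v_{n-m}|v_{n-m+1},\dots,v_n)\,Z_m(v_{n-m+1},\dots,v_n).
\label{cutidentity}
\end{align}
To derive \eqref{cutidentity} from the definition \eqref{GDWBPF} rather than merely from the figure, one substitutes $u_i=v_{n-m+i}$ into the ordered product of $R$-matrices, applies $R(0)=P$ on the $m$ newly diagonal crossings, and pushes these permutations through the remaining operators (they only relabel tensor factors), checking that the boundary covectors $\langle e|,\langle n|$ and the boundary vectors $|w\rangle,|s\rangle$ regroup precisely into the boundary data of a $T_{n-m,m}$ block times that of a $Z_m$ block, with no residual scalar.

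Granting \eqref{cutidentity}, the rest is algebra. Applying it with $(m,n)$ replaced by $(m,n+m)$, the left-hand side is evaluated by \eqref{BPSexpressionspecialization} and the factor $Z_m(v_{n+1},\dots,v_{n+m})$ by \eqref{factorizationhalftwist}. Dividing, the two factors $(e_1s_1+e_2s_2)^m$ cancel and one is left with
\begin{align}
T_{n,m}(v_1,\dots,v_n|v_{n+1},\dots,v_{n+m})
=(n_1s_1+n_2s_2)^{n}(n_1w_1+n_2w_2)^{m}\,
\frac{\displaystyle\prod_{i=n+1}^{n+m}\prod_{j=1}^{n+m}\frac{v_i-v_j+c}{c}}
{\displaystyle\prod_{n+1\le i<j\le n+m}\frac{v_i-v_j+c}{c}}.
\end{align}
To reach \eqref{factorizationtrapezoid} it then suffices to simplify the ratio: split $\prod_{i=n+1}^{n+m}\prod_{j=1}^{n+m}$ into the block $1\le j\le n$, which gives $\prod_{i=n+1}^{n+m}\prod_{j=1}^{n}\frac{v_i-v_j+c}{c}$, and the block $n+1\le j\le n+m$, whose diagonal terms $i=j$ are trivial and whose off-diagonal part equals $\prod_{n+1\le i<j\le n+m}\frac{v_i-v_j+c}{c}\cdot\prod_{n+1\le i<j\le n+m}\frac{v_j-v_i+c}{c}$; the denominator cancels the first of these two products, leaving exactly the right-hand side of \eqref{factorizationtrapezoid}.

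The main obstacle is the first step: proving \eqref{cutidentity} carefully instead of pictorially, that is, verifying that the severing permutations carry no extra weight and that the surviving boundary vectors distribute exactly into the $T$-block and the $Z_m$-block with the indicated arguments. Once that combinatorial bookkeeping is in place, everything downstream is elementary manipulation of rational products.
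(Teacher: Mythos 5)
Your proposal is correct and follows essentially the same route as the paper: the same cut identity (the paper's \eqref{twopartfactorization}, justified there also only graphically via $R(0)=P$ separating the lattice into a trapezoid and a triangle), combined with \eqref{BPSexpressionspecialization} and \eqref{factorizationhalftwist}, followed by the same cancellation of $(e_1s_1+e_2s_2)^m$ and the same splitting of the double product into the $1\le j\le n$ block and the off-diagonal $n+1\le j\le n+m$ block. The only cosmetic difference is that you apply the identity with $(m,n+m)$ in place of the paper's relabeling of the statement to $T_{n-m,m}$, which is equivalent.
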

\begin{proof}

By relabeling, \eqref{factorizationtrapezoid} can be rewritten as
\begin{align}
&T_{n-m,m}(v_1,\dots,v_{n-m}|v_{n-m+1},\dots,v_n) \nn \\
=&(n_1 s_1+n_2 s_2)^{n-m} (n_1 w_1+n_2 w_2)^m \nn \\
\times&\prod_{i=n-m+1}^n \prod_{j=1}^{n-m} \Bigg( \frac{v_i-v_j+c}{c}  \Bigg)
\prod_{n-m+1 \le i < j \le n} \Bigg( \frac{v_j-v_i+c}{c} \Bigg).
\label{factorizationtrapezoidtwo}
\end{align}
We show \eqref{factorizationtrapezoidtwo}
by combining the results \eqref{BPSexpressionspecialization} and
\eqref{twopartfactorization} together with the following observation.
At the level of partition functions, we find the specialization $u_1=v_{n-m+1},u_2=v_{n-m+2},\dots,u_m=v_{n}$
factorizes the generalized domain wall boundary partition functions as
\begin{align}
&Z_{m,n}(u_1,\dots,u_m|v_1,\dots,v_n)|_{u_1=v_{n-m+1},u_2=v_{n-m+2},\dots,u_m=v_{n}}  \nn \\
=&
T_{n-m,m}(v_1,\dots,v_{n-m}|v_{n-m+1},\dots,v_n)
Z_m(v_{n-m+1},\dots,v_n). \label{twopartfactorization}
\end{align}
This factorization can be understood from the graphical description
of $Z_{m,n}(u_1,\dots,u_m|v_1,\dots,v_n)$ as follows.
One finds that from the $(n-m+1)$-th to the $n$-th columns,
the specialization $u_1=v_{n-m+1},u_2=v_{n-m+2},\dots,u_m=v_{n}$
turns one $R$-matrix in each column to a permutation operator, and as graphically represented in Figure \ref{figurerdomainwallspecialization},
one notes this specialization separates the original regime into two regimes.
The partitition function for the left regime gives
$T_{n-m,m}(v_1,\dots,v_{n-m}|v_{n-m+1},\dots,v_n)$
and the one for the right regime gives $Z_m(v_{n-m+1},\dots,v_n)$,
and the product of these partition functions give \\
$Z_{m,n}(u_1,\dots,u_m|v_1,\dots,v_n)|_{u_1=v_{n-m+1},u_2=v_{n-m+2},\dots,u_m=v_{n}}$,
i.e. we get \eqref{twopartfactorization}.

Combining \eqref{factorizationhalftwist},
\eqref{BPSexpressionspecialization} and
\eqref{twopartfactorization}, we get
\begin{align}
&T_{n-m,m}(v_1,\dots,v_{n-m}|v_{n-m+1},\dots,v_n)
=Z_m(v_{n-m+1},\dots,v_n)^{-1} \nn \\
\times&Z_{m,n}(u_1,\dots,u_m|v_1,\dots,v_n)|_{u_1=v_{n-m+1},u_2=v_{n-m+2},\dots,u_m=v_{n}}  \nn \\
=&
\frac{1}{(e_1 s_1+e_2 s_2)^m} \prod_{n-m+1 \le i < j \le n} \Bigg( \frac{c}{v_i-v_j+c} \Bigg) \nn \\
\times&(n_1 s_1+n_2 s_2)^{n-m} (e_1 s_1+e_2 s_2)^m (n_1 w_1+n_2 w_2)^m
\prod_{i=n-m+1}^n \prod_{j=1}^n  \Bigg(\frac{v_i-v_j+c}{c} \Bigg),
\end{align}
and simplification gives \eqref{factorizationtrapezoidtwo}.

\end{proof}

\begin{figure}[h] 
\centering
\includegraphics[width=10cm]{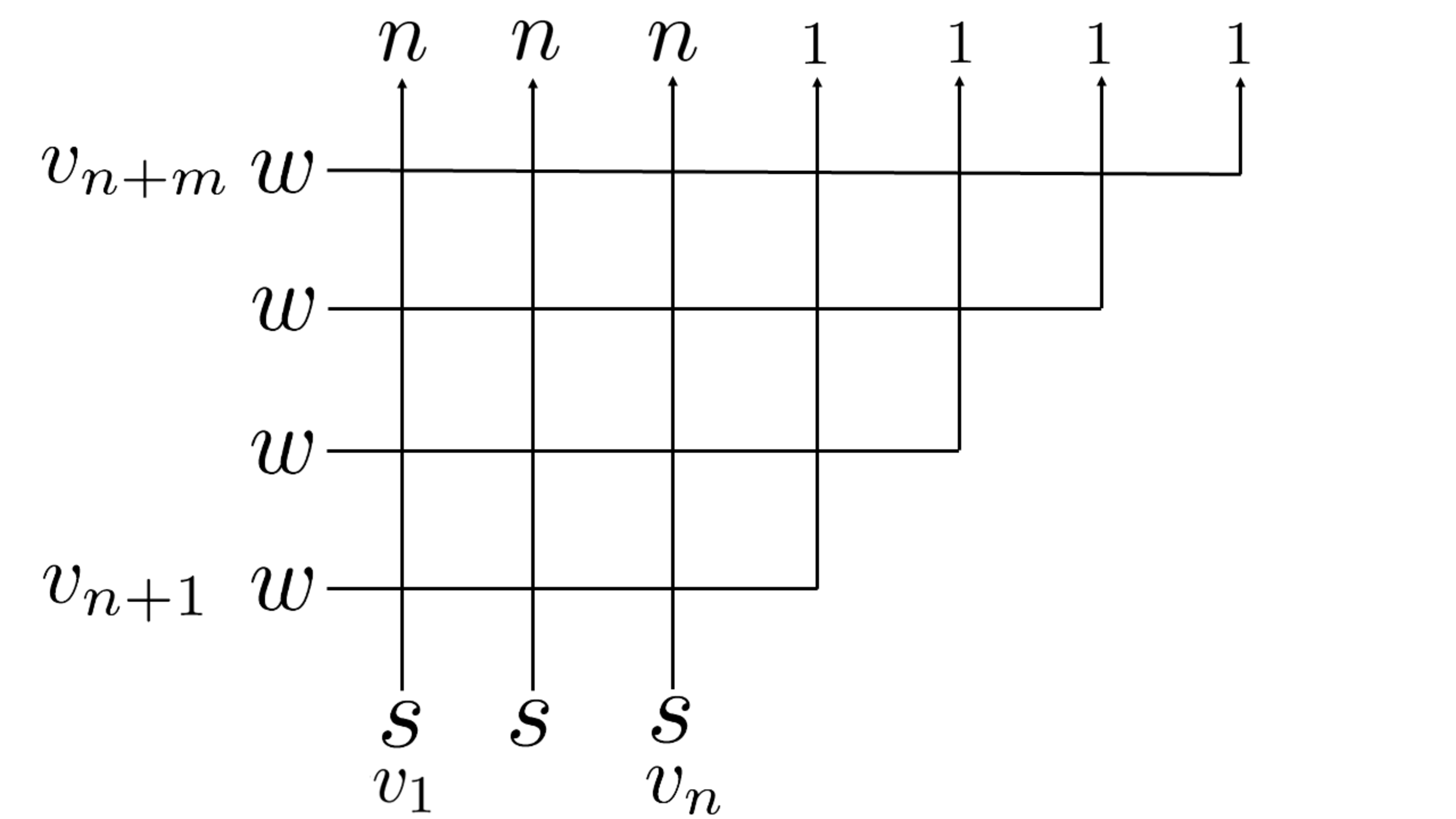}
\caption{Emptiness formation probability $P(m)$.}
\label{figuretrapezoidEFP}
\end{figure}

\begin{figure}[h] 
\centering
\includegraphics[width=10cm]{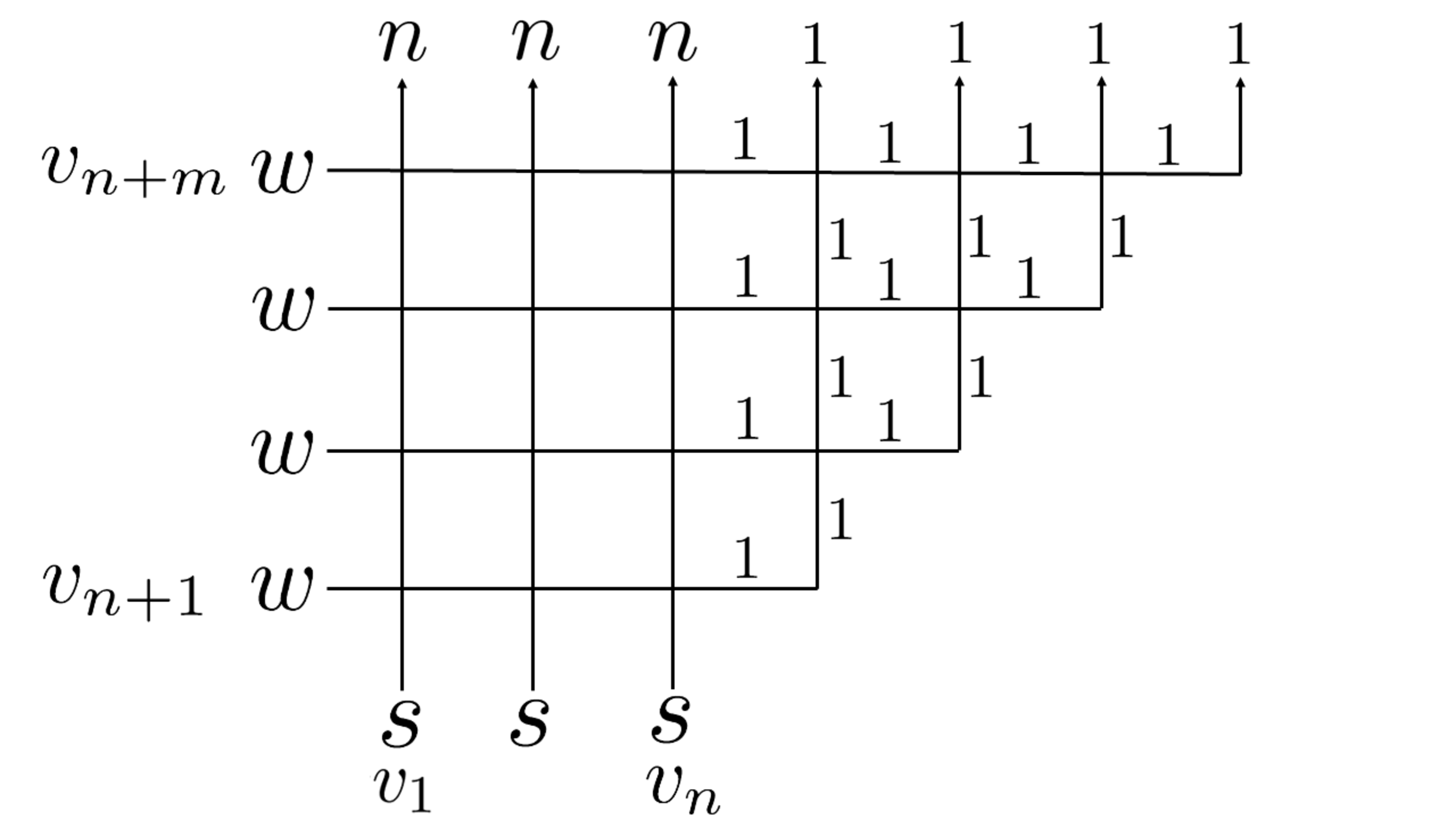}
\caption{Emptiness formation probability $P(m)$ is equivalent to the probability
that the states in the triangular regime are all fixed to be 1.
}
\label{figuretrapezoidEFPfreeze}
\end{figure}

Finally, we discuss a class of probabilities which admits determinant expressions.
Here we assume $v_1 \le v_2 \le \cdots \le v_{n+m}$, $n_1,n_2,s_1,s_2,w_1,w_2 \ge 0$
for nonnegativity of the partition functions and quantities which we introduce.
Let us introduce a version of emptiness formation probability
which the states of $m$-consecutive part from right in the upper boundary
are  all fixed to be 1, as graphically represented  in Figure \ref{figuretrapezoidEFP}.
We denote the probability as $P(m)$.
We can show the following.
\begin{theorem}
The emptiness formation probability $P(m)$ has the following determinant representation
\begin{align}
P(m)
=&\Bigg( \frac{n_1 s_1}{n_1 s_1+n_2 s_2} \Bigg)^m 
\nn \\
\times& \det_{1 \le j,k \le m}
\Bigg( \delta_{jk} 
-\frac{\gamma c}{v_{j+n}-v_{k+n}+c} 
\prod_{i=1}^n \frac{v_{j+n}-v_i}{v_{j+n}-v_i+c}
\prod_{ \substack{i=1 \\ i \neq j}  }^m
\frac{v_{j+n}-v_{i+n}+c}{v_{j+n}-v_{i+n}}
\Bigg), \label{EFP}
\end{align}
where
$\displaystyle \gamma=\frac{-n_2 (s_2 w_1-s_1 w_2)}{s_1(n_1 w_1+n_2 w_2)}$.
\end{theorem}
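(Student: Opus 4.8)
The plan is to write $P(m)$ as a ratio of partition functions, use the freezing depicted in Figure~\ref{figuretrapezoidEFPfreeze} to identify the numerator with a generalized domain wall boundary partition function in which one boundary covector has been specialized to $\langle1|$, and then feed in the determinant formula \eqref{BPSdeterminant} together with the closed form \eqref{factorizationtrapezoid} of the denominator. By the very definition of the emptiness formation probability,
\begin{align}
P(m)=\frac{W_m}{T_{n,m}(v_1,\dots,v_n|v_{n+1},\dots,v_{n+m})},
\end{align}
where $W_m$ is the same trapezoid partition function as $T_{n,m}$ except that the $m$ designated covectors $\langle n|$ on the upper boundary have been replaced by $n_1\langle1|$, i.e.\ those edges are forced to be in state $1$ while the accompanying vertex weight $n_1$ is retained; the denominator is already known explicitly from \eqref{factorizationtrapezoid}.

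Next I would analyse $W_m$ through the freezing of Figure~\ref{figuretrapezoidEFPfreeze}. Every nonzero entry of the $R$-matrix \eqref{twociteR} conserves the number of $1$'s along each line, so fixing the $m$ rightmost upper edges to state $1$, together with the identity matrices on the slanted boundary, forces the whole triangular sub-region into its unique all-$1$ configuration, where it contributes only $(1,1)\!\to\!(1,1)$ vertex weights. Consequently $W_m$ factorizes as
\begin{align}
W_m=&\,n_1^m\,\Bigg(\prod_{n+1\le i<j\le n+m}\frac{v_j-v_i+c}{c}\Bigg) \nn \\
\times&\,Z_{m,n}(v_{n+1},\dots,v_{n+m}|v_1,\dots,v_n)\big|_{e_1=1,\,e_2=0},
\end{align}
the remaining rectangular region being a generalized domain wall boundary partition function whose covector $\langle e|$ has become $\langle1|$, since state $1$ is the only state the frozen triangle can feed into that side; the identification of the leftover triangular weight with the triangular products in $Z_m$ and in $T_{n,m}$ is checked against \eqref{factorizationhalftwist} and \eqref{twopartfactorization}.

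Then I would substitute \eqref{BPSdeterminant} for $Z_{m,n}(v_{n+1},\dots,v_{n+m}|v_1,\dots,v_n)$ at $e_1=1,\,e_2=0$. At this point $(e_1s_1+e_2s_2)^m$ becomes $s_1^m$ and, crucially, $\beta$ becomes exactly $\gamma=\dfrac{-n_2(s_2w_1-s_1w_2)}{s_1(n_1w_1+n_2w_2)}$. Extracting the factor $\prod_{i=1}^n\frac{v_{n+j}-v_i+c}{v_{n+j}-v_i}$ from the $j$-th row of the $m\times m$ determinant and combining it with the prefactor $\prod_{j=1}^m\prod_{i=1}^n(v_{n+j}-v_i)$ turns that determinant into precisely the one appearing in \eqref{EFP}. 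Finally I would form the ratio $W_m/T_{n,m}$ using \eqref{factorizationtrapezoid}: the factors $(n_1w_1+n_2w_2)^m$ cancel, the powers of $c$ cancel, the products $\prod(v_i-v_j+c)$ cancel, $(n_1s_1+n_2s_2)^{n-m}/(n_1s_1+n_2s_2)^n=(n_1s_1+n_2s_2)^{-m}$, and the surviving $n_1^m s_1^m$ combines with this to leave exactly $\big(\tfrac{n_1s_1}{n_1s_1+n_2s_2}\big)^m$ in front of the determinant, which is \eqref{EFP}.

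The main obstacle is the freezing step: one must verify carefully, from the graphical description and the precise shape of the trapezoid, that forcing the $m$ upper edges really does propagate through the entire triangular sub-region, that the complementary region is exactly a $Z_{m,n}$ with $\langle e|$ replaced by $\langle1|$, and that the leftover triangular weight is exactly $\prod_{n+1\le i<j\le n+m}\tfrac{v_j-v_i+c}{c}$. Once this geometric bookkeeping is pinned down, the remaining manipulation — matching $\beta|_{e_1=1,\,e_2=0}$ with $\gamma$, extracting the determinant, and cancelling the prefactors against \eqref{factorizationtrapezoid} — is routine.
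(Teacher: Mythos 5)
Your proposal is correct and follows essentially the same route as the paper: the identity $P(m)=W_m/T_{n,m}$ with $W_m=n_1^m\prod_{n+1\le i<j\le n+m}\frac{v_j-v_i+c}{c}\,Z_{m,n}(v_{n+1},\dots,v_{n+m}|v_1,\dots,v_n)|_{e_1=1,e_2=0}$ obtained by freezing the triangular sub-region is exactly the paper's equation \eqref{EFPfactorization}, and the subsequent substitution of \eqref{BPSdeterminant} (with $\beta|_{e_1=1,e_2=0}=\gamma$), the row-by-row absorption of $\prod_i(v_{j+n}-v_i)/(v_{j+n}-v_i+c)$ into the determinant, and the cancellations against \eqref{factorizationtrapezoid} reproduce the paper's simplification verbatim. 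Your explicit ice-rule justification of the freezing is a welcome elaboration of a step the paper only asserts.
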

\begin{proof}
One can see that all the states in the triangular regime are frozen to be 1 as given in Figure \ref{figuretrapezoidEFPfreeze},
once the  states of $m$-consecutive part from right in the top boundary
are all fixed to be 1.
Then we find $P(m)$ is given by

\begin{align}
P(m)=&
Z_{m,n}(v_{n+1},\dots,v_{n+m}|v_1,\dots,v_n)|_{e_1=1,e_2=0}
\times
n_1^m \prod_{n+1 \le i < j \le n+m} \Bigg( \frac{v_j-v_i+c}{c} \Bigg) \nn \\
\times&T_{n,m}(v_1,\dots,v_{n}|v_{n+1},\dots,v_{n+m})^{-1}.
\label{EFPfactorization}
\end{align}
Note $Z_{m,n}(v_{n+1},\dots,v_{n+m}|v_1,\dots,v_n)|_{e_1=1,e_2=0}$ comes from the rectangular regime,
the factor
$\displaystyle \prod_{n+1 \le i < j \le n+m} \Bigg( \frac{v_j-v_i+c}{c} \Bigg)$ comes from the
$R$-matrix elements in the triangular regime,
$n_1^m$ comes from taking $m$ consecutive states from right in the top boundary to be 1,
and the partition function $T_{n,m}(v_1,\dots,v_{n}|v_{n+1},\dots,v_{n+m})$ is the normalization factor.

Inserting into \eqref{EFPfactorization}
the factorized expression for $T_{n,m}(v_1,\dots,v_{n}|v_{n+1},\dots,v_{n+m})$
\eqref{factorizationtrapezoid}
and
\begin{align}
&Z_{m,n}(v_{n+1},\dots,v_{n+m}|v_1,\dots,v_n)|_{e_1=1,e_2=0} \nn \\
=&s_1^m (n_1 s_1+n_2 s_2)^{n-m} (n_1 w_1+n_2 w_2)^m
\prod_{\substack{1 \le i \le m \\ 1 \le j \le n}}
\Bigg( \frac{v_{i+n}-v_j}{c} \Bigg)
\nn \\
\times& \det_{1 \le j,k \le m}
\Bigg( \delta_{jk} \prod_{i=1}^n \frac{v_{j+n}-v_i+c}{v_{j+n}-v_i}
-\frac{\gamma c}{v_{j+n}-v_{k+n}+c} \prod_{ \substack{i=1 \\ i \neq j}  }^m \frac{v_{j+n}-v_{i+n}+c}{v_{j+n}-v_{i+n}}
\Bigg),
\end{align}
which follows from \eqref{BPSdeterminant}
and after simplification, we get \eqref{EFP}.

\end{proof}

\section{Conclusion}
We showed that for the rational six vertex model, partition functions
under triangular boundary and its extension to trapezoid boundary exhibit factorization property.
The boundary conditions are mixture of up and down spins and
are motivated and inspired by the recent work on
the generalized domain wall boundary partition functions \cite{BPS}.
Factorization means that when dealing with correlation fuctions,
denominator part becomes easier than the usual domain wall boundary.
As an illustration, we introduce a version of emptiness formation probabilities
which admits determinant representations.
It would be interesting for example to investigate to which extent this factorization property holds.

\section*{Acknowledgements}
The author thanks
Jan de Gier, Saburo Kakei, William Mead, Yuan Miao and Junji Suzuki
for discussions and comments.
This work is supported by Grant-in-Aid for Scientific Research 21K03176, JSPS.


\begin{thebibliography}{00}
%
%


\bibitem{Izergin}
A.G. Izergin,
Partition function of the six-vertex model in a finite volume, Dokl. Akad. Nauk
SSSR 297 (1987) 331.

\bibitem{Korepin}
V. Korepin,
Calculation of norms of Bethe wave functions, Commun. Math. Phys. 86 (1982) 391.





\bibitem{Lieb}
E. Lieb,
Exact Solution of the F Model of An Antiferroelectric,
Phys. Rev. Lett. 18 (1967) 1046.

\bibitem{Sutherland}
B. Sutherland,
Exact solution of a two-dimensional model for hydrogen-bonded crystals,
Phys. Rev. Lett.  19 (1967) 103.

\bibitem{Baxter}
R.J. Baxter,
Exactly solved models in statistical mechanics,
Academic Press, London, 1982.








\bibitem{Kuperberg}
G. Kuperberg,
Symmetry Classes of Alternating-Sign Matrices under One Roof,
Ann. Math. 156 (2002) 835.


\bibitem{KZJ}
V. Korepin and P. Zinn–Justin,
Thermodynamic limit of the six-vertex model with domain wall boundary conditions,
J. Phys. A  33 (2000) 7053.


\bibitem{CP}
F. Colomo and A. G. Pronko,
Square ice, alternating sign matrices, and classical orthogonal polynomials,
J. Stat. Mech. 01 (2005) P01005.

\bibitem{BF}
P. Bleher and V. Fokin,
Exact solution of the six-vertex model with domain wall boundary conditions, Disordered phase,
Comm. Math. Phys. 268 (2006) 223.



\bibitem{Gaudin}
M. Gaudin,
La Fonction d'Onde de Bethe, Masson, 1983;
M. Gaudin, The Bethe Wavefunction,
translation by J-S Caux, Cambridge University Press, 2014.


\bibitem{GSV}
N. Gromov, A. Sever and P. Vieira,
Tailoring three-point functions and integrability III,
J. High Energy Phys. 07 (2012) 044.

\bibitem{Kostov}
I. Kostov,
Three-point function of semiclassical states at weak coupling,
J. Phys. A: Math. Theor. 45 (2012) 494018.

\bibitem{FW}
O. Foda and M. Wheeler,
Partial domain wall partition functions,
J. High Energy Phys. 07 (2012) 186.





\bibitem{BPS}
S. Belliard, R.A. Pimenta, N. A. Slavnov,
Modified rational six vertex model on the rectangular lattice,
SciPost Phys. 16 (2024) 009.

\bibitem{BSV}
S. Belliard, N.A. Slavnov and B. Vallet,
Scalar product of twisted XXX modified Bethe vectors,
J. Stat. Mech. 2018 (2018) 093103.

\bibitem{BS}
S. Belliard and N.A. Slavnov, Overlap between usual and modified Bethe vectors,
Theor. Math. Phys. 209 (2021) 1387.



\bibitem{KN}
A.N. Kirillov and M. Noumi,
$q$-difference raising operators for Macdonald polynomials and
the integrality of transition coefficients, Algebraic Methods and $q$-Special Functions, CRM Proc.
Lecture Notes vol 22, American Mathematical Society, pp 227–243, 1999.

\bibitem{MN}
K. Mimachi and M. Noumi,
An integral representation of eigenfunctions for Macdonald's $q$-difference operators,
Tohoku Math. J. 49 (1997) 517.

\bibitem{GZZ}
A. Gorsky, A. Zabrodin and A. Zotov,
Spectrum of quantum transfer matrices via classical
many-body systems, J. High Energy Phys.
 01 (2014) 070.

\bibitem{MPT}
M. Minin, A. Pronko and V. Tarasov,
Construction of determinants for the six-vertex model with
domain wall boundary conditions,
J. Phys. A: Math. Theor. 56 (2023) 295204.

\bibitem{MO}
K. Motegi and R. Ohkawa,
Algebraic formulas and geometric
derivation of source identities, J. Phys. A: Math. Theor. 57 (2024) 325201.






\bibitem{BFK}
R. Behrend, I. Fischer, and M. Konvalinka,
Diagonally and antidiagonally symmetric alternating sign matrices of odd order,
Adv. Math. 315 (2017) 324.

\bibitem{ABF}
A. Ayyer, R. Behrend, and I. Fischer,
Extreme diagonally and antidiagonally symmetric alternating sign
matrices of odd order,
Adv. Math. 367 (2020) 107125.

\bibitem{BFKtwo}
R. Behrend, I. Fischer and C. Koutschan,
Diagonally symmetric alternating sign matrices,
arXiv:2309.08446.

\bibitem{GdGMW}
A. Garbali, J. de Gier, W. Mead and M. Wheeler,
Symmetric functions from the six-vertex model in half-space,
arXiv:2312.14348, Ann. Henri Poincar\'e (2024).


\bibitem{FRT}
L.D. Faddeev, N.Yu. Reshetikhin and L.A. Takhtajan,
Quantization of Lie groups and Lie algebras, Leningrad Math. J. 1 (1990) 193.

\bibitem{Drinfeld}
V.G. Drinfeld, Quantum Groups, Vol. 1 of Proccedings of the International Congress of
Mathematicians (Berkeley, Calif., 1986). Amer. Math. Soc., Providence, RI, pp.198–820,
1987.

\bibitem{Jimbo}
M. Jimbo. A $q$-difference analogue of $U(\mathfrak{g})$ and the Yang-Baxter equation,
Lett. Math. Phys. 10 (1985) 63.


\bibitem{KMO}
A. Kuniba, S. Maruyama and M. Okado, 
Multispecies TASEP and tetrahedron equation, J. Phys. A: Math. Theor. 49 (2016) 114001.

\bibitem{BaSe}
V.V. Bazhanov and S.M. Sergeev, Zamolodchikov's tetrahedron equation and hidden
structure of quantum groups, J. Phys. A: Math. General 39 (2006) 3295.







\end{thebibliography}
\end{document}